\documentclass[a4paper,onecolumn,11pt,unpublished]{quantumarticle}
\pdfoutput=1
\usepackage[utf8]{inputenc}
\usepackage[english]{babel}
\usepackage[T1]{fontenc}
\usepackage{amsmath,amssymb,amsthm}
\usepackage{thmtools}
\usepackage{hyperref}
\usepackage[nameinlink]{cleveref} 
\usepackage{braket}
\usepackage{bm}
\usepackage{dsfont}

\usepackage{tikz}
\usepackage{lipsum}

\usepackage[compress,numbers]{natbib}

\declaretheorem[
    name=Theorem,
    refname={theorem,theorems},
    Refname={Theorem,Theorems}
]{theorem}

\declaretheorem[
    sibling=theorem,
    name=Proposition,
    refname={proposition,propositions},
    Refname={Proposition,Propositions}
]{proposition}

\declaretheorem[
    sibling=theorem,
    name=Lemma,
    refname={lemma,lemmas},
    Refname={Lemma,Lemmas}
]{lemma}

\declaretheorem[
    style=definition,
    name=Definition,
    refname={definition,definitions},
    Refname={Definition,Definitions}
]{definition}

\declaretheorem[
    style=definition,
    sibling=definition,
    name=Fact,
    refname={fact,facts},
    Refname={Fact,Facts}
]{fact}

\declaretheorem[
    style=definition,
    sibling=definition, 
    name=Notation,
    refname={notation,notations},
    Refname={Notation,Notations}
]{notation}

\newenvironment{repeatedstatement}[2][]
{%
    \par\medskip\noindent
    \textbf{\Cref{#2}}%
    \if\relax\detokenize{#1}\relax
        \textbf{.}%
    \else
        \ \textnormal{(#1).}%
    \fi
    \ \itshape
}
{%
    \par\medskip
}

\newcommand{\bbE}{\mathbb{E}}
\newcommand{\U}{\mathcal{U}}
\newcommand{\e}{\epsilon}
\def\ketbra#1#2{\mathinner{|{#1}\rangle\langle{#2}|}}
\def\doubleket#1{\mathinner{|{#1}\rangle \hspace{-.6mm} \rangle}}

\begin{document}

\title{Near-Optimal Parallel Unitary Process Tomography in Diamond Distance}

\author{Noam Scully}
\email{nscully@uwaterloo.ca}
\affiliation{Department of Physics and Astronomy, University of Waterloo, Ontario N2L 3G1, Canada}
\affiliation{Perimeter Institute for Theoretical Physics, Waterloo, Ontario N2L 2Y5, Canada}
\author{Sisi Zhou}
\email{szhou1@perimeterinstitute.ca}
\affiliation{Perimeter Institute for Theoretical Physics, Waterloo, Ontario N2L 2Y5, Canada}
\affiliation{Department of Physics and Astronomy, University of Waterloo, Ontario N2L 3G1, Canada}
\affiliation{Department of Applied Mathematics and Institute for Quantum Computing, University of Waterloo, Ontario N2L 3G1, Canada}
\maketitle

\begin{abstract}
Unitary process tomography is the act of learning how an unknown unitary gate acts on a quantum system. It is an essential procedure for realizing a functioning quantum computer. As quantum technologies develop, it is important to understand the computational resources needed to conduct unitary process tomography using a variety of types of circuits. In particular, when measuring error in diamond distance, it is known how to achieve optimal query complexity using a sequential circuit architecture, but it had not been known whether a parallel circuit architecture could achieve the same optimal query complexity. In this work, we prove that unitary process tomography with near-optimal query complexity is indeed possible using a parallel circuit by introducing an explicit, novel parallel procedure. This closes the gap between sequential and parallel circuits up to a logarithmic factor, showing that parallel and sequential strategies are nearly equally capable for this task.
\end{abstract}

\section{Introduction}

The task of \textit{unitary process tomography} is to learn how a unitary gate acts on a system when the gate's inner workings are initially unknown. The desired output of unitary process tomography procedures is a classical description of a unitary gate that is an $\e$-close estimate of the unknown unitary, as measured by a notion of distance between unitary gates. This procedure has been studied extensively and is of fundamental importance in quantum computation~\cite{baldwin2014,gutoski2014process,scott2008optimizing,Bisio_2010,yrc,HKOT,Grewal_Liang_2026,HLSZY}. For example, process tomography can play a role in the engineering of quantum devices during gate certification~\cite{gebhart2023learning,eisert2020quantum}. It also underpins certain applications of those devices, such as sensing~\cite{gebhart2023learning}. 

A growing body of research seeks to understand the amount of resources required to carry out unitary process tomography in a variety of settings, including investigating different circuit types~\cite{Bisio_2010,yrc,HKOT,Grewal_Liang_2026,HLSZY}. In a \textit{sequential} protocol, the unknown unitary acts many times in succession on a single qudit, possibly with adaptive controls between queries. Complementarily, in a non-adaptive \textit{parallel} protocol, all queries of the unknown unitary occur simultaneously on a register of many qudits, and the unitary acts at most once on each qubit. Sequential protocols generalize parallel protocols: A sequential protocol can simulate a parallel one using the same number of queries by interleaving SWAP gates between sequential applications of the unitary~\cite{giovannetti2006quantum,liu2023optimal,bavaresco2021strict}. These circuit types have tradeoffs when it comes to implementation. Parallel circuits require more space but have the upside of low circuit depths. Meanwhile, sequential circuits may have low space overhead but require significantly longer coherent control time. To execute unitary process tomography in practice, it is beneficial to understand whether it can be efficiently achieved with various circuit types, given that we cannot predict which types will be most feasible on future quantum technologies. This question also has theoretical significance: When analyzed using the framework of quantum metrology, the complexity of unitary estimation is sensitive to the type of circuit used~\cite{IF,Yuan}. Resolving whether the same is true in the tomography framework considered in this paper could open up new questions about how these two frameworks relate to each other.

Currently, it is known that all unitary process tomography protocols must query the unknown unitary at least $\Omega(d^2\sqrt{\e^{-1}})$ times, where $d$ is the dimension of the unitary and the closeness $\e$ of the estimate is measured with entanglement infidelity~\cite{YKSQM}. It is possible to match that lower bound with a parallel algorithm that has query complexity $O(d^2 \sqrt{\e^{-1}})$ with respect to entanglement infidelity~\cite{yrc}. Since sequential circuits can simulate parallel ones, the query complexity for this task is also $O(d^2 \sqrt{\e^{-1}})$ for sequential circuits. There is no difference in query complexity between parallel and sequential unitary tomography protocols when error is measured using entanglement infidelity.

However, we are interested in measuring error in diamond distance. Diamond distance is more stringent than entanglement infidelity and captures worst-case differences between our estimate and the unknown unitary~\cite{HKOT}. When using diamond distance to measure error instead of entanglement infidelity, parallel and sequential strategies have not been previously shown to perform equally efficiently. When $\e$ is measured in diamond distance, at least $\Omega(d^2\e^{-1})$ queries are needed for unitary process tomography~\cite{HKOT}. This holds even with access to the unknown unitary's inverse and to controlled versions of those gates~\cite{HKOT}. Using a sequential algorithm, one can match that lower bound, achieving query complexity $O(d^2 \e^{-1})$ in diamond distance~\cite{HKOT,Grewal_Liang_2026}. When restricted to parallel protocols, it is possible to convert the distance measure of the infidelity-optimal protocol in~\cite{yrc} from entanglement infidelity to diamond distance~\cite{HKOT}. The resulting complexity is known to fall between $O(d^2 \e^{-1})$ and $O(d^{2.5} \e^{-1})$~\cite{HKOT}, potentially falling short of optimal. A different parallel protocol based on classical shadow estimation of unitary channels yields the same result of $O(d^{2.5} \e^{-1})$ parallel queries\footnote
{
An error occurred in the proof of Theorem 3.3 of~\cite{HLSZY}. As a result, the query complexity of the unitary tomography protocol in~\cite{HLSZY} is in fact $O(d^{2.5} \e^{-1})$, not the initially claimed $O(d^{2} \e^{-1})$.
}~\cite{HLSZY}.
These parallel protocols both exhibit \textit{Heisenberg scaling}, meaning optimal dependence on $\e$ of $\e^{-1}$. Yet in the parallel regime, no prior work has closed the query complexity gap in dependence on $d$ between the lower bound of $\Omega(d^2 \e^{-1})$ and the upper bound of $O(d^{2.5} \e^{-1})$ for diamond distance.

In this work, we close this gap up to a logarithmic factor. For each $r \in \mathbb{N}_+$, we display a parallel unitary tomography protocol with query complexity $O(rd^{2+1/(2r)}\e^{-1})$. In particular, selecting $r = \log(d)$ yields a near-optimal parallel protocol with query complexity $O(d^2 \log(d) \e^{-1})$. Our protocol is similar to the ``measure and operate'' protocols in \cite{Bisio_2010} and \cite{yrc}. We formalize our result as follows:

\begin{definition}[Diamond distance unitary process tomography problem]\label{def:unitary_tomography_problem} Let $U \in \mathrm{U}(d)$ be an unknown $d$-dimensional unitary operator and let $\e,\eta \in (0,1)$ be known error parameters. Consider a protocol $\mathcal{A}$ that uses black-box access to $\mathcal{U}(U)$ and whose output is a classical description of a unitary $\widehat{U}$ sampled from some $\mathrm{U}(d)$-valued random variable $\bm{\widehat{U}}$. We say $\mathcal{A}$ solves the \textit{diamond distance unitary process tomography problem} with query complexity $n$ if $\mathcal{A}$ applies $\mathcal{U}(U)$ $n$ times and 
\begin{equation} \mathrm{Pr}\left(d_\diamond\left(\mathcal{U}(U),\mathcal{U}(\bm{\widehat{U}})\right) \leq \e \right) \geq 1-\eta. \end{equation}
\end{definition}

\begin{theorem}[Near-optimal parallel unitary process tomography]\label{thm:near-optimal_tomography} For all $r \in \mathbb{N}_+$ there exists a protocol solving the diamond distance unitary process tomography problem with $\eta=1/3$ and query complexity $O(rd^{2+1/(2r)}\e^{-1})$ such that all queries of $\mathcal{U}(U)$ occur in parallel.
\end{theorem}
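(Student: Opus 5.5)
The plan is a multi-stage ``measure and operate'' scheme. All queries of $\mathcal{U}(U)$ are made at once, and all adaptivity is pushed into processing that happens \emph{after} the queries. Concretely, we split the $n$ parallel queries into $r+1$ blocks. Each block feeds $\mathcal{U}(U)$ with a fixed, non-adaptive input state; the known entangled probe of~\cite{yrc} is a natural choice. After all queries have acted, the blocks are processed in order:
\begin{enumerate}
\item Block $0$ is measured with the estimator of~\cite{yrc}, giving a first estimate $V_0$.
\item For each $k\ge 1$, we apply the classically controlled gate $V_{k-1}^\dagger$ to the output legs of block $k$.
\item Block $k$ then encodes the unknown residual $W_k=V_{k-1}^\dagger U$, which we estimate by some $\widehat W_k$ and set $V_k=V_{k-1}\widehat W_k$.
\end{enumerate}
Because $U$ is always applied before any adaptive correction, the protocol remains parallel in the sense of \Cref{thm:near-optimal_tomography}. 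Moreover $d_\diamond(\mathcal{U}(U),\mathcal{U}(V_k))=d_\diamond(\mathcal{U}(W_k),\mathcal{U}(\widehat W_k))$ by unitary invariance, so it suffices to control the per-stage error. Write $\e_k$ for the target diamond error after stage $k$ and $\delta=\e_{k-1}$ for the guaranteed distance of $W_k$ from the identity.

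For the base stage we use the infidelity-optimal protocol of~\cite{yrc} together with the conversion from~\cite{HKOT}. With $n_0=O(d^2\e^{-1})$ queries, the entanglement infidelity is $O(\e^2/d^2)$, and the standard conversion then gives diamond error $\e_0=O(\sqrt d\,\e)$.

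The key new ingredient is a local refinement lemma. Suppose $d_\diamond(\mathcal{U}(W),\mathcal{U}(I))\le\delta$ up to global phase. Then $n$ parallel queries to $W$, processed by a Heisenberg-limited estimator of the generator $W=e^{iH}$ with $\|H\|=O(\delta)$, should output $\widehat W$ with
\begin{equation}
d_\diamond\bigl(\mathcal{U}(W),\mathcal{U}(\widehat W)\bigr)\le C\sqrt{\delta\, d^2/n}
\end{equation}
with constant success probability. The intuition is that the $\sqrt d$ loss between Frobenius (infidelity) error and operator-norm (diamond) error only arises for error matrices of norm comparable to $\delta$. Since the estimation error scales like the Frobenius-type error $d^2/n$, a geometric mean of the two is the natural target.

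Granting the lemma, choose $\e_k=\e_0 d^{-k/(2r)}$, so $\e_r=\e$, and take $n_k=C^2 d^{2+1/(2r)}\e_k^{-1}$. Substituting into the lemma gives
\begin{equation}
C\sqrt{\e_{k-1}d^2/n_k}=\sqrt{\e_k d^{1/(2r)}\cdot \e_k d^{-1/(2r)}}=\e_k .
\end{equation}
So each stage meets its target. The sum $\sum_k n_k$ is geometric and dominated by the last term, $O(d^{2+1/(2r)}\e^{-1})$.

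To make every stage succeed simultaneously with total failure probability at most $1/3$, we boost each stage to failure probability $O(1/r)$. We do this by running $O(\log r)$ independent copies of the block, which are still parallel queries, and taking a geometric median of the candidate estimates in diamond distance; this is valid by the triangle inequality. Alternatively, we simply allot budget $O(r)$ times larger to each stage and apply Markov/Chebyshev bounds. Either route yields the claimed total $O(r d^{2+1/(2r)}\e^{-1})$, and taking $r=\log d$ gives $O(d^2\log d\,\e^{-1})$.

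The main obstacle is the refinement lemma. I would first try to prove it as follows. Write $W=e^{iH}$ with $H$ traceless and $\|H\|_{\mathrm{op}}\lesssim\delta$. Run the~\cite{yrc} estimator, or a shadow-type estimator of $H$ in the spirit of~\cite{HLSZY}, on $W$ and project its output onto the ball $\{\|H'\|_{\mathrm{op}}\le 2\delta\}$. The difficulty is that the Frobenius error $\|H-\widehat H\|_2^2\lesssim d^4/n^2$ must be turned into an operator-norm bound of order $\sqrt{\delta d^2/n}$ rather than $\sqrt d\cdot d^2/n$. For this I would need a matrix-concentration statement for the estimator's error: either its error is spread roughly isotropically across eigen-directions, or a truncation/shrinkage argument that exploits the a priori bound $\|H\|\le\delta$. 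I expect this operator-norm control of the residual estimator to be the technical heart of the proof.
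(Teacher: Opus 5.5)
The bookkeeping of your stages is fine: adaptive processing after a single $U^{\otimes n}$ query is still parallel, and the exponents close. The genuine gap is that the local refinement lemma is unproven, and with your parameter choices it is essentially the theorem itself. At stage $k$ you take $n_k=C^2d^{2+1/(2r)}\epsilon_k^{-1}$ and $\delta=\epsilon_{k-1}=d^{1/(2r)}\epsilon_k$. The lemma then asserts that $n_k$ parallel queries reach diamond error $\epsilon_k=C^2d^{2+1/(2r)}/n_k$, which is exactly the rate you want to prove. The only extra help is a promise that the unknown is already within a factor $d^{1/(2r)}$ of that accuracy, a constant factor when $r=\log d$, and such a mild localization does not make the problem easier in any evident way.

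For the estimator you propose to run on each block, the promise is useless. The estimator of~\cite{yrc} is covariant, so the law of $W_k^\dagger\widehat W_k$ does not depend on $W_k$ at all. The final error is then just that of $n_r$ queries of~\cite{yrc}, which is only known to be $O(d^{2.5}/n_r)$.

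Your suggested fix, projecting $\widehat H$ onto $\{\|H'\|_{\mathrm{op}}\le 2\delta\}$, does not help either. The projection is non-expansive in Frobenius norm, so you would know $\|E\|_F\lesssim d^{2.5}/n$ and $\|E\|_{\mathrm{op}}\lesssim\delta$. A rank-one error of size $\min\{\delta,d^{2.5}/n\}$ satisfies both, so reaching $\epsilon_k<\delta$ still costs $n\gtrsim d^{2.5}/\epsilon_k$. The ingredient you flag at the end, control of the \emph{operator norm} (eigenvalue spread) of the estimator's error beyond Frobenius/infidelity control, is the whole content of the theorem. The multi-stage structure adds nothing until it is supplied. (Minor slip: with $n_0=O(d^2/\epsilon)$ queries the infidelity of~\cite{yrc} is $O(\epsilon^2)$, not $O(\epsilon^2/d^2)$, though your $\epsilon_0=O(\sqrt d\,\epsilon)$ is right.)

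The paper supplies that control directly, in a single stage. It keeps the measure-and-operate structure of~\cite{yrc} but changes the probe: Young diagrams whose consecutive rows differ by more than $r$, with product amplitudes $u(k)\propto\sin^r\bigl(\pi(2k+1)/(2N)\bigr)$. The argument then runs as follows.
\begin{enumerate}
\item Covariance and the Weyl integral formula reduce $\mathbb{E}(\bm\epsilon^2)$ to $\tfrac43\,\mathbb{E}_{\mathbb{T}^d}\Delta(\bm\alpha)^2$, where $\Delta$ is the largest distance between eigenvalues.
\item $\Delta^2$ is dominated by $4\bigl(\tfrac1d\sum_{j_1<j_2}|e^{i\alpha_{j_1}}-e^{i\alpha_{j_2}}|^{2r}\bigr)^{1/r}$.
\item Each $2r$-th moment is computed exactly via Plancherel as $\tfrac{1}{d!}\|(S_{j_1}-S_{j_2})^r\hat A\|^2$. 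The row gaps $>r$ keep the $d!$ antisymmetrized pieces disjointly supported, which reduces everything to the finite-difference bounds $\|D_+^s u\|_{\ell^2}\le(\pi r/N)^s$.
\item Markov's inequality finishes.
\end{enumerate}
In the paper, the $d^{1/(2r)}$ comes from the $L^{2r}$ moment over eigenvalue pairs and the factor $r$ from the smoothness of $\sin^r$. Both arise from the probe design rather than from a number of refinement stages. To rescue your route you would need an operator-norm estimate of this kind for your residual estimator anyway.
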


A consequence of \Cref{thm:near-optimal_tomography} is that by repeating its protocol $O(\log 1/\eta)$ times independently and in parallel, one can construct an estimator solving the unitary process tomography problem with error parameters $(\e,\eta)$ for any $\eta \in (0,1)$; see, for example, Proposition~2.4 of \cite{HKOT}. Overall, this uses $O(rd^{2+1/(2r)}\e^{-1} \log (1/\eta))$ queries of $\mathcal{U}(U)$, all of which occur in parallel.

In summary, our protocol is the first parallel unitary tomography protocol to achieve near-optimal query complexity $O(d^2\log(d)\e^{-1})$. We therefore close the gap in $d$-dependence between parallel and sequential circuits for this task up to a logarithmic factor.

\section{Preliminaries}

\subsection{Unitaries and diamond distance}

\begin{notation}[Unitaries] We adopt the following notation for working with unitaries. 

$\mathrm{U}(d)$ is the $d$-dimensional unitary group. 

$\bm{\widehat{U}}$ is a $\mathrm{U}(d)$-valued random variable denoting the output of the estimation protocol we will be considering. 

$\widehat{U} \in \mathrm{U}(d)$ is an arbitrary unitary. We think of $\widehat{U}$ as a possible specific output of the unitary estimation protocol; for example, we may refer to the probability density associated with the outcome $\bm{\widehat{U}} = \widehat{U}$. 

$\U(U)$ is the unitary channel induced by $U \in \mathrm{U}(d)$. $\U(\bm{\widehat{U}})$ is a unitary-channel-valued random variable.
\end{notation}

\begin{definition}[Diamond distance] For channels $\mathcal{E}_1,\mathcal{E}_2$ acting on density matrices $\rho$,
\begin{align*}
    d_\diamond(\mathcal{E}_1,\mathcal{E}_2) &= \| \mathcal{E}_1 - \mathcal{E}_2 \|_\diamond \\
    &= \sup_{m}\max_\rho \| (\mathcal{E}_1 \otimes I_m)\rho - (\mathcal{E}_2 \otimes I_m)\rho \|_1.
\end{align*}
\end{definition}

\begin{definition}[Error $\bm{\e}$] Let $U \in \mathrm{U}(d)$ be an unknown unitary to be estimated and $\bm{\widehat{U}}$ be the random unitary-valued output of an estimation procedure. We define the $\mathbb{R}$-valued random variable
\begin{align*}
    \bm{\e} &= d_\diamond(\U(\bm{\widehat{U}}), \U(U))
\end{align*}
to quantify the error associated with the estimates produced by the procedure. We also use the non-bolded symbol $\e \in (0,1)$ to represent an amount of error we aim for; for example, we may discuss the probability that ${\bm{\e}} \leq \e$.
\end{definition}

\subsection{Generic parallel unitary tomography strategy from \cite{yrc}}

The protocol we use for near-optimal parallel unitary tomography is an adapted version of the tomography protocol in \cite{yrc}. Here, we lay out the shared core features of both protocols. In \Cref{sec:near-optimal_protocol_defn}, we specify how our protocol departs from the one in \cite{yrc}.

\begin{definition}[Parallel black-box protocol] Let $\mathcal{E}$ be a channel acting on qudits and let $\mathcal{A}$ be an algorithm with black-box access to $\mathcal{E}$. We say $\mathcal{A}$ is \textit{parallel} if the only queries to $\mathcal{E}$ within $\mathcal{A}$ are a single instance of a tensor power $\mathcal{E}^{\otimes m}$ acting on $m$ qudits simultaneously.
\end{definition}

\begin{definition}[{Generic parallel unitary tomography strategy \cite{yrc}}]\label{def:mo_protocol} Let $U \in \mathrm{U}(d)$ be an unknown unitary whose gate $\U(U)$ we have black box access to and which is to be estimated. The following parallel protocol results in a classical description of a unitary $\widehat{U}$ serving as our estimate:
\begin{enumerate}
    \item Apply $\U(U)^{\otimes n} \otimes I^{\otimes n}$ to a chosen probe state $\rho_\text{probe}$, a density matrix on $\mathcal{H}^{\otimes n} \otimes \mathcal{H}^{\otimes n}$.
    \item Measure the resulting state with a POVM of the form $\{ M_{\widehat{U}}\ \mathrm{d}\widehat{U} \}$ where $\mathrm{d}\widehat{U}$ is the Haar measure.
    \item Return the measurement outcome $\widehat{U}$.
\end{enumerate}
For a given $U$, $p(\widehat{U}|U)$ is defined to be the probability density of obtaining estimate $\widehat{U}$ from this process.
\end{definition}

Within this framework, we now specify the POVM we use and the form that our probe state will take. These constructions are due to \cite{Bisio_2010} and \cite{yrc}. For the representation theory background, see for example~\cite{fulton2013representation}.

\begin{definition}[Young diagram, $\mathsf{Y}_n^d$, chapter 4 of \cite{fulton2013representation}] A Young diagram of size $n$ with at  most $d$ rows is a tuple $\lambda = (\lambda_1, \ldots, \lambda_d)$ of nonnegative integers with non-increasing row size $\lambda_1 \geq \cdots \geq \lambda_d$ that sum to $n$. For a nonnegative integer $n$, $\mathsf{Y}_n^d$ is the set of all Young diagrams of size $n$ with at most $d$ rows.
\end{definition}

\begin{notation}[Schur-Weyl decomposition of the Hilbert space, chapter 6 of \cite{fulton2013representation}]
Let $\mathcal{H}$ be a $d$-dimensional Hilbert space that the unknown unitary $U$ will act on. $\mathcal{H}^{\otimes n}$ decomposes as a sum of irreducible representations $\mathcal{H}_\lambda$ corresponding to elements of $\mathsf{Y}_n^d$, each with dimension $d_\lambda$ and each appearing in the decomposition with multiplicity equal to the dimension of a multiplicity space $\mathcal{M}_\lambda$, $\mathcal{H}^{\otimes n} = \bigoplus_{\lambda \in \mathsf{Y}_n^d} \mathcal{H}_\lambda \otimes \mathcal{M}_\lambda.$ Let $U_\lambda$ denote the action of $U$ on $\mathcal{H}_\lambda$. Under this decomposition, $U^{\otimes n}$ acts by $\bigoplus_{\lambda \in \mathsf{Y}_n^d} U_\lambda \otimes I_{\text{dim}\mathcal{M}_\lambda}$.
\end{notation}

\begin{definition}[$q, \mathsf{Y}_\text{probe}$] $q$ is a chosen probability distribution over $\mathsf{Y}_n^d$,
\[ q = \{q_\lambda\}_{\lambda \in \mathsf{Y}_n^d},\ \sum_{\lambda \in \mathsf{Y}_n^d} q_\lambda = 1\]
and the probe state will depend on this choice. We let $\mathsf{Y}_\text{probe} \subset \mathsf{Y}_n^d$ denote the support of $q$. 
\end{definition}

\begin{notation}[Double ket notation] For an operator $A$, we write its vectorization  as $\doubleket{A} = \sum_{i,j} \bra{i} A \ket{j} \ket{i}\ket{j}$.
\end{notation}

\begin{definition}[Probe state for protocol \cite{Bisio_2010,yrc}]\label{def:probe_state} We use $2n$ copies of the Hilbert space $\mathcal{H}$, with $U^{\otimes n}$ acting on the first $n$ along with $n$ ancillas. The probe state is
\begin{align*}
    \ket{\psi_q} = \bigoplus_{\lambda \in \mathsf{Y}_\text{probe}} \sqrt{\frac{q_\lambda}{d_\lambda}} \doubleket{I_{d_\lambda}} \otimes \ket{\eta_\lambda} \in \mathcal{H}^{\otimes 2n}
\end{align*}
where $\ket{\eta_\lambda} \in \mathcal{M}_\lambda^{\otimes 2}$ is arbitrary. ($\ket{\eta_\lambda}$ is acted on by $I_{\dim\mathcal{M}_\lambda}$ and does not encode information about $U$.)
\end{definition}

Applying $U^{\otimes n}$, which acts by $U_\lambda \otimes I_{d_\lambda}$ on $\doubleket{I_{d_\lambda}}$, produces
\[ \ket{\psi_{q,U}} = \bigoplus_{\lambda \in \mathsf{Y}_\text{probe}} \sqrt{\frac{q_\lambda}{d_\lambda}} \doubleket{U_\lambda} \otimes \ket{\eta_\lambda}. \]
Then we measure with the following POVM.
\begin{definition}[POVM for protocol \cite{yrc}]\label{def:protocol_measurement} Letting $\mathrm{d}\widehat{U}$ denote the Haar measure on $\mathrm{U}$(d), the POVM is
\[ \{ \ketbra{\psi_{\widehat{U}}}{\psi_{\widehat{U}}} \mathrm{d}\widehat{U} \} \]
where $\ket{\psi_{\widehat{U}}} = \bigoplus_{\lambda \in \mathsf{Y}_\text{probe}} \sqrt{d_\lambda} \doubleket{\widehat{U}_\lambda} \otimes \ket{\eta_\lambda}$.
When using this probe state and measurement in the protocol in \Cref{def:mo_protocol}, the probability density of getting measurement outcome $\widehat{U}$ when the target unitary is $U$ is~\cite{yrc}
\[ p(\widehat{U}|U) = \left| \sum_{\lambda \in \mathsf{Y}_\text{probe}} \sqrt{q_\lambda} \chi_\lambda(U^\dagger \widehat{U}) \right|^2, \]
where $\chi_\lambda$ is the character of the irreducible unitary representation corresponding to $\lambda$. (Recall that the character of a unitary representation is the trace of the unitary's action; in this case, $\chi_\lambda(U^\dagger \widehat{U}) = \mathrm{Tr}(U^\dagger_\lambda\hat{U}_\lambda)$.)
\end{definition}

\begin{fact}[Covariance of $p$ \cite{yrc}]\label{fact:p_covariant} With this choice of probe state and measurement, $p$ is unitarily covariant, meaning $p(U_1 \widehat{U} U_2^\dagger | U_1 U U_2^\dagger) = p(\widehat{U} | U )$ for all unitaries $U_1,U_2$.
\end{fact}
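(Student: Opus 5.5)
The plan is to prove the covariance directly from the closed-form density in \Cref{def:protocol_measurement},
\[ p(\widehat{U}|U) = \Bigl| \sum_{\lambda \in \mathsf{Y}_\text{probe}} \sqrt{q_\lambda}\, \chi_\lambda(U^\dagger \widehat{U}) \Bigr|^2 , \]
which shows that $p$ depends on the pair $(\widehat{U},U)$ only through the single group element $U^\dagger \widehat{U}$, and only through class functions of it. So I will first reduce everything to an identity about characters.

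Substituting $U \mapsto U_1 U U_2^\dagger$ and $\widehat{U} \mapsto U_1 \widehat{U} U_2^\dagger$, the argument of each character becomes
\[ (U_1 U U_2^\dagger)^\dagger (U_1 \widehat{U} U_2^\dagger) = U_2 U^\dagger U_1^\dagger U_1 \widehat{U} U_2^\dagger = U_2 (U^\dagger \widehat{U}) U_2^\dagger . \]
The left unitary $U_1$ cancels exactly, and $U_2$ appears only as a conjugation. Since each $\chi_\lambda$ is the trace of a representation, I then have
\[ \chi_\lambda\bigl(U_2 V U_2^\dagger\bigr) = \mathrm{Tr}\bigl((U_2)_\lambda V_\lambda (U_2)_\lambda^\dagger\bigr) = \mathrm{Tr}(V_\lambda) = \chi_\lambda(V), \]
using the homomorphism property $(ABC)_\lambda = A_\lambda B_\lambda C_\lambda$ and $(U_2^\dagger)_\lambda = (U_2)_\lambda^\dagger$ (the representation is unitary), followed by cyclicity of the trace. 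Applying this with $V = U^\dagger\widehat{U}$ term by term inside the modulus gives $p(U_1\widehat{U}U_2^\dagger|U_1UU_2^\dagger)=p(\widehat{U}|U)$.

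There is no real obstacle here. The only point needing care is that the argument relies on the quoted formula for $p$, and that formula is taken from \cite{yrc} rather than proved in this paper. If I wanted the argument to be self-contained, I would rederive it. The amplitude is
\[ \langle \psi_{\widehat{U}} | \psi_{q,U}\rangle = \sum_\lambda \sqrt{q_\lambda}\, \langle\langle \widehat{U}_\lambda | U_\lambda\rangle\rangle\, \|\eta_\lambda\|^2 , \]
and using $\langle\langle A|B\rangle\rangle = \mathrm{Tr}(A^\dagger B)$ this is a sum of terms $\mathrm{Tr}(\widehat{U}_\lambda^\dagger U_\lambda)$, assuming the $\eta_\lambda$ are normalized. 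Its squared modulus equals the stated expression, since $\overline{\chi_\lambda(W)} = \chi_\lambda(W^\dagger)$ and the overall squared modulus removes the conjugation. Alternatively, covariance follows directly at the operator level: $|\psi_{q,U_1UU_2^\dagger}\rangle$ and $|\psi_{U_1\widehat{U}U_2^\dagger}\rangle$ are both obtained from the unprimed vectors by the same unitary $\bigoplus_\lambda (U_1)_\lambda \otimes \overline{(U_2)_\lambda} \otimes I$, so their overlap is unchanged.
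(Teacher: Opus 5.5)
Your proof is correct. The paper gives no proof of its own, since the fact is simply cited from \cite{yrc}, but your class-function argument is the evident justification given the closed form for $p$ stated just before it: $p$ depends on $(\widehat{U},U)$ only through characters of $U^\dagger\widehat{U}$, and that element becomes $U_2(U^\dagger\widehat{U})U_2^\dagger$ under the transformation. Your operator-level variant, which applies the same unitary $\bigoplus_\lambda (U_1)_\lambda\otimes\overline{(U_2)_\lambda}\otimes I$ to both $\ket{\psi_{q,U}}$ and $\ket{\psi_{\widehat{U}}}$, is a self-contained alternative that relies neither on the quoted formula nor on normalizing the $\ket{\eta_\lambda}$.
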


\section{Results}

\subsection{The near-optimal protocol}\label{sec:near-optimal_protocol_defn}

Our protocol is based on the protocol in \cite{yrc}. The main new idea is to view the construction in \cite{yrc} as a member of a family of protocols defined by a parameter $r \in \mathbb{N}_+$. Different choices of $r$ lead to distinct supports $\mathsf{Y}_\text{probe}$ for the distribution $q$.

\begin{definition}[$\mathcal{A}(r,d,n,U)$] Our algorithm $\mathcal{A}$ for near-optimal unitary tomography takes as inputs known parameters $r,d$, and $n$ as well as an unknown $U \in \mathrm{U}(d)$. We then conduct the parallel unitary tomography protocol in \Cref{def:mo_protocol} with the probe state in \Cref{def:probe_state} and measurement in \Cref{def:protocol_measurement}. All that is left is to specify the distribution $q_\lambda$ which defines the probe state. 

We first define its support $\mathsf{Y}_\text{probe}$. $\mathsf{Y}_\text{probe}$ will consist of Young diagrams shaped like upside-down ``staircases'' constructed according to this rough recipe. First, use $d(d-1)(r+1)/2$ of the $n$ boxes to form a Young diagram whose consecutive row lengths each decrease by $r+1$ boxes. Then, using the remaining boxes, form a similarly shaped Young diagram whose first $d-1$ row lengths decrease by some constant determined by $n,d,$ and $r$. This constant is called $N$. The last row gap is not $N$ but instead $1$. Because boxes come in integer amounts, it may not be possible to form this second diagram exactly, but we take a close approximation. We may be left with a small number $n_0$ of leftover boxes after the approximation, which get distributed as evenly as possible across all rows. Adding these two diagrams together, we form a ``base'' diagram in $\mathsf{Y}_\text{probe}$ with consecutive row gaps of at least $r+N+1$ (except for the last row, which has a gap of $r+2$). Finally, we remove a limited number of boxes from the last row of the base diagram and redistribute them across the first $d-1$ rows. While doing so, we make sure that consecutive row gaps are still more than $r$ boxes after redistributing. $\mathsf{Y}_\text{probe}$ consists of these redistributed diagrams. For each $\lambda \in \mathsf{Y}_\text{probe}$, we define a vector $\tilde{\lambda}$ to encode the way its boxes were redistributed. The weight $q_\lambda$ associated to $\lambda \in \mathsf{Y}_\text{probe}$ is determined by $\tilde{\lambda}$.

Now we make this recipe precise. To that end, define the following constants:
\begin{align*}
    N &= \left\lfloor \frac{1}{3d-2}\left(\frac{2n}{d-1} + d - 2 - dr \right) \right\rfloor \\
    n_0 &= n - \frac{((3d-2)N-d+2+dr)(d-1)}{2}.
\end{align*}

Let $\mu_0$ be the flattest Young diagram in $\mathsf{Y}_{n_0}^d$; more precisely, it is the unique diagram satisfying that each difference between its consecutive rows is 0, except for possibly one difference of 1. Next, define $\lambda_0 \in \mathsf{Y}_{n-n_0}^d$ to have rows of length
\begin{equation} {\lambda_0}_j = N(2d-3) + (d-1)r + 1 - (N+r+1)(j-1), \quad 1 \leq j \leq d. \end{equation}
We introduce the notation $\lambda|_{d-1}$ to refer to the first $d-1$ rows of a Young diagram $\lambda$. Define
\begin{equation} \mathsf{Y}_\text{probe} = \{ \lambda \in \mathsf{Y}_n^d \text{ such that }  \lambda|_{d-1} = \mu_0|_{d-1} + \lambda_0|_{d-1} + \tilde\lambda \text{ for some } \tilde\lambda \in \{0,\ldots,N-1\}^{d-1} \}. \end{equation}
The last rows are uniquely determined by the fact that the total size is $n$; for a given $\tilde{\lambda}$, $\lambda_d = {\mu_0}_d + {\lambda_0}_d - \sum_{j<d}\tilde{\lambda}_j$. Note that $\lambda \in \mathsf{Y}_\text{probe}$ are in bijection with $\tilde\lambda \in \{0,\ldots,N-1\}^{d-1}$.

Compared to the choice of $\mathsf{Y}_\text{probe}$ in \cite{yrc}, our definition is formulated to widen the gap between each row of each Young diagram by an additional $r$ boxes. We define $r$ to be strictly positive, but if one were to set $r=0$ in the formulas above, one would recover the choice of $\mathsf{Y}_\text{probe}$ in \cite{yrc}. We omit $r=0$ from our definitions since we will later require $1/r$ to be defined.

The distribution $q$ over $\mathsf{Y}_\text{probe}$ is a member of the sine state family from \cite{HLSZY}. It is defined in terms of the $\tilde\lambda$ corresponding to $\lambda$,
\begin{equation} q_\lambda = \prod_{j=1}^{d-1}\frac{2^{2r}}{N\binom{2r}{r}}\sin^{2r}\left(\frac{\pi(2\tilde\lambda_j+1)}{2N}\right). \end{equation}
\end{definition}

\subsection{Proof of \texorpdfstring{\Cref{thm:near-optimal_tomography}}{Theorem~1}}\label{sec:pf_of_main_thm}

While we use a unitary tomography protocol similar to the one in~\cite{yrc}, we analyze our protocol's performance using a novel proof strategy. The authors of~\cite{yrc} use orthogonality of characters to bound the error of their protocol when measured with entanglement infidelity, a distance measure which they show admits a simple expression in terms of characters. Applying this approach with diamond distance does not yield a similarly tractable expression. Instead, our proof hinges on other representation-theoretic tools, namely the Weyl integral formula and the Fourier transform, to bound diamond-distance error.

The main technical step of \Cref{thm:near-optimal_tomography} is handled in the following proposition.

\begin{proposition}[Error variance bound]\label{prop:error_variance} Let $\bm\e = d_\diamond(\U(U),\U(\bm{\widehat{U}}))$ where $\bm{\widehat{U}}$ is the random variable output of $\mathcal{A}(r,d,n,U)$. Assume that $d \geq 2$ and that $r,n,$ and $d$ are valued such that $N > r$. Then $\mathbb{E}(\bm\e^2) \leq 64 \pi^2/3 \cdot (d-1)^{1/r} r^2/N^2$.
\end{proposition}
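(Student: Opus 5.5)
By \Cref{fact:p_covariant}, and since $d_\diamond$ is invariant under $\U(\cdot)\mapsto \U(U_1\cdot U_2^\dagger)$, we may assume $U=I$. Then $\bm{\widehat U}$ has Haar density $p(V|I)=|\sum_{\lambda}\sqrt{q_\lambda}\chi_\lambda(V)|^2$, which is a class function, and $\bm\e=d_\diamond(\U(V),\U(I))$ depends only on the eigenphases $\theta_1,\dots,\theta_d$ of $V$. I will bound $\bm\e$ by a trigonometric polynomial in the phases. Recall that $d_\diamond(\U(V),\mathrm{id})=2\sqrt{1-\delta^2}$, where $\delta$ is the distance from $0$ to the convex hull of $\{e^{i\theta_j}\}$. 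This gives $\bm\e\le 2\min_\phi\max_j|e^{i\theta_j}-e^{i\phi}|$, which is at most a constant times the angular spread of the phases. I would then aim for a bound of the form
\[\bm\e^{2r}\le C^{r}\sum_{j=1}^{d-1}\sin^{2r}\!\Big(\tfrac{\theta_{\sigma(j)}-\theta_{\sigma(j+1)}}{2}\Big)\cdot(\text{combinatorial factor}),\]
or, more robustly, $\bm\e\le c\max_{j<d}|\sin((\theta_j-\theta_d)/2)|$-type bounds. These use differences against a fixed reference phase, so that no ordering is needed. Then
\[\mathbb E\bm\e^2\le(\mathbb E\bm\e^{2r})^{1/r}\le c^2\Big(\sum_{j<d}\mathbb E\sin^{2r}(\cdot)\Big)^{1/r},\]
by Jensen and $\max\le\sum$. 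This is exactly where the factor $(d-1)^{1/r}$ comes from.

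To compute $\mathbb E\sin^{2r}(\cdot)$, I would use the Weyl integral formula together with the Weyl character formula $\chi_\lambda=\det(e^{il_j\theta_k})/\Delta(\theta)$, where $l_j=\lambda_j+d-j$. The Vandermonde $|\Delta|^2$ in the Haar measure cancels the denominators. This leaves
\[\frac{1}{d!(2\pi)^d}\int\Big|\sum_\lambda\sqrt{q_\lambda}\det(e^{il_j\theta_k})\Big|^2 g(\theta)\,d\theta,\]
where $g$ is a trigonometric polynomial in the phase differences of degree at most $r$ in each frequency direction; for instance $\sin^{2r}(x/2)$ has frequencies in $[-r,r]$. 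Expanding the determinants gives a double sum over permutations $\sigma,\tau$ and over $\lambda,\lambda'$.

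The key step is orthogonality. Because every $\lambda\in\mathsf Y_\text{probe}$ has consecutive shifted row gaps $l_j-l_{j+1}>r+1$ (this is what the extra $r$ in $\lambda_0$ buys), a frequency shift of size at most $r$ cannot map one strictly decreasing tuple $l$ onto a nontrivially permuted tuple $l'$. Hence only the $\sigma=\tau$ terms survive, and each of these equals the same quantity. The remaining integral factorizes over the coordinates $\tilde\lambda_j$, since $q$ is a product measure. It becomes $\prod_j$ of one-dimensional integrals $\frac{1}{2\pi}\int|f(x)|^2\sin^{2r}(x/2)\,dx$, where $f(x)=\sum_{t=0}^{N-1}\sqrt{q_t}e^{itx}$ is the one-dimensional sine state, and all factors except one equal $1$.

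For the sine state, $|f|^2$ is a smoothed Fejér-type kernel concentrated at width $\sim 1/N$. Its $2r$-th sine moment can be computed exactly by writing $\sin^{2r}(x/2)=2^{-2r}\sum_k\binom{2r}{k}(-1)^{k-r}e^{i(k-r)x}$. This reduces it to autocorrelations $\sum_t\sqrt{q_tq_{t+s}}$, which are explicit for $\sin^{r}$ amplitudes. I expect this to give a bound like $(C\pi r/N)^{2r}$ (using $N>r$), and hence, after taking the $1/r$-th power, $\mathbb E\bm\e^2\le C'(d-1)^{1/r}r^2/N^2$. The constants would then be tracked to reach $64\pi^2/3$.

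The main obstacles are twofold. First, the one-dimensional moment estimate must be carried out cleanly, and I would try the discrete-difference/Fourier identity first. Second, the reduction from $\bm\e$ to a sum of $2r$-th powers of single phase differences must keep all frequencies within the $\pm r$ window that the gap condition tolerates. For this I would use differences against the last coordinate, which corresponds to the gap-adjusted row $d$.
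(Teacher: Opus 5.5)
\textbf{Gap: the surviving diagonal terms are not all equal and mostly do not factor.}

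Several parts of your plan are sound. Taking $\phi=\theta_d$ gives $\bm\e\le 2\max_{j<d}|e^{i\theta_j}-e^{i\theta_d}|$. Then $\max\le\sum$ and Jensen give $\mathbb{E}(\bm\e^2)\le 4\bigl(\sum_{j<d}\mathbb{E}|e^{i\theta_j}-e^{i\theta_d}|^{2r}\bigr)^{1/r}$, which is slightly sharper than the paper's $\Delta$-based reduction. The vanishing of the $\sigma\neq\tau$ cross terms is also correct: both $l$ and $l'$ have gaps exceeding $r$, so a shift $k(e_j-e_d)$ with $|k|\le r$ cannot reorder entries.

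The error is the claim that the surviving diagonal terms ``each equal the same quantity'' and factor into one-dimensional integrals with a single nontrivial factor. Substitute $\phi_m=\theta_{\sigma^{-1}(m)}$. The $\sigma$-th term becomes $\frac{1}{(2\pi)^d}\int\prod_{i<d}|f(\phi_i-\phi_d)|^2\sin^{2r}\bigl((\phi_a-\phi_b)/2\bigr)\,d\phi$ with $(a,b)=(\sigma(j),\sigma(d))$. As $\sigma$ ranges over $\mathrm{S}_d$, this pair ranges uniformly over all ordered pairs of distinct rows.
\begin{itemize}
\item Only when $d\in\{a,b\}$, a fraction $2/d$ of the terms, does it reduce to your one-dimensional integral $M=\frac{1}{2\pi}\int|f(x)|^2\sin^{2r}(x/2)\,dx$.
\item Otherwise it is the genuinely two-dimensional $M_2=\frac{1}{(2\pi)^2}\iint|f(x)|^2|f(y)|^2\sin^{2r}\bigl((x-y)/2\bigr)\,dx\,dy$, which is not equal to $M$.
\end{itemize}
So in fact $\mathbb{E}\sin^{2r}\bigl((\theta_j-\theta_d)/2\bigr)=\frac{2}{d}M+\frac{d-2}{d}M_2$. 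Choosing the last eigenphase as reference cannot avoid this. The Weyl density $|A|^2$ is symmetric under permuting the $\theta$'s, so eigenphase $d$ is not attached to row $d$; the antisymmetrization assigns it to every row.

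\textbf{What is missing, and how to close it.} You need a bound on the cross-row term $M_2$ in terms of one-dimensional data. For instance, $|e^{ix}-e^{iy}|\le|e^{ix}-1|+|e^{iy}-1|$ together with Minkowski's inequality in $L^{2r}$ of the product density gives $M_2\le 2^{2r}M$. The paper does the equivalent in Fourier space: it writes $I-S_{i_1}^{-1}S_{i_2}=S_{i_2}D_{+,i_1}+D_{-,i_2}$ and applies the binomial theorem and the triangle inequality. This is exactly why its bound carries $(2\pi r/N)^{r}$ rather than $(\pi r/N)^{r}$.

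You also still owe the one-dimensional estimate, which you only anticipate. By Parseval, $2^{2r}M=\|D_+^r u\|_{\ell^2(\mathbb{Z})}^2$ exactly. The paper proves $\|D_+^r u\|_{\ell^2(\mathbb{Z})}\le(\pi r/N)^r$ in three steps:
\begin{itemize}
\item it views $u$ as samples of $\sin^r(\pi x)$, which is $C^{r-1}$ after extension by zero;
\item it writes the $r$-th finite difference as an integral of the $r$-th derivative;
\item it uses $\|f^{(r)}\|_{L^2(0,1)}\le(\pi r)^r\|f\|_{L^2(0,1)}$.
\end{itemize}
Your autocorrelation route would have to reach the same bound. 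With both pieces in place, your reduction yields $\mathbb{E}(\bm\e^2)\le 4\bigl((d-1)(2\pi r/N)^{2r}\bigr)^{1/r}=16\pi^2(d-1)^{1/r}r^2/N^2$, which is within the stated $64\pi^2/3$ constant.
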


We defer the full proof of \Cref{prop:error_variance} to section \Cref{sec:variance_bound_proof}. In brief, the proof proceeds in four main steps:
\begin{enumerate}
    \item Using the Weyl integral formula, we shift from taking an expectation over the unitary group $\mathrm{U}(d)$ to taking an expectation over the $d$-dimensional torus $\mathbb{T}^d$. Let $\bm\alpha = (\bm\alpha_1,\ldots,\bm\alpha_d)$ be a random element of $\mathbb{T}^d$ sampled from the resulting probability distribution on $\mathbb{T}^d$. There is a function $\Delta:\mathbb{T}^d \rightarrow \mathbb{R}$ that satisfies
    \begin{equation}\bbE_{\mathrm{U}(d)}(\bm{\e}^2) \leq \frac{4}{3} \bbE_{\mathbb{T}^d}(\Delta(\bm\alpha)^2).\end{equation}
    \item We show that bounding $\bbE_{\mathbb{T}^d}(\Delta(\bm\alpha)^2)$ reduces to bounding $\bbE_{\mathbb{T}^d}(|e^{i\bm{\alpha}_{j_1}} - e^{i\bm{\alpha}_{j_2}}|^{2r})$, where $1 \leq j_1 < j_2 \leq d$ are arbitrary indices.
    \item Next, we prove that $\bbE_{\mathbb{T}^d}(|e^{i\bm{\alpha}_{j_1}} - e^{i\bm{\alpha}_{j_2}}|^{2r})$ is controlled by finite differences of a function $u:\mathbb{Z} \rightarrow \mathbb{R}$ which is determined by our choice of amplitudes in our probe state. The argument proceeds by rewriting the expectation over the torus as the $L^2$ norm of a function on the torus. This equals the $\ell^2$ norm of the function's Fourier transform, which we can relate to $u$.
    \item By analyzing the derivatives of a continuization of $u$, we give an upper bound on $u$'s finite differences. This is sufficient to prove our desired result about $\mathbb{E}(\bm\e^2)$ when combined with the previous steps.
\end{enumerate}

\begin{repeatedstatement}[Near-optimal parallel unitary process tomography]{thm:near-optimal_tomography} Let $r,d \in \mathbb{N}_+$ with $d \geq 2$ and let $\e \in (0,1)$. Let $U \in \mathrm{U}(d)$ be unknown. The protocol $\mathcal{A}(r,d,n,U)$ with query complexity $n$ set to
\begin{align*}
    n &= \frac{(d-1)((3d-2)N_* - d+2+dr)}{2} \\
    &\in O\left(\frac{rd^{2+1/(2r)}}{\e}\right)
\end{align*}
where $N_* = \left\lceil 8\pi r(d-1)^{1/(2r)}\e^{-1} \right\rceil$ solves the unitary process tomography problem \Cref{def:unitary_tomography_problem} with error parameters $(\e,1/3)$, and all queries of $\mathcal{U}(U)$ occur in parallel.
\end{repeatedstatement}
\begin{proof} Given the definition of $N$ and our choice of $n$, one can show that $N = N_*$. This means $N>r$, so \Cref{prop:error_variance} applies. From \Cref{prop:error_variance}, we know that $\mathbb{E}(\bm{\e}^2) \leq 64 \pi^2/3 \cdot (d-1)^{1/r} r^2/N^2$, where $\bm{\e} = d_\diamond(\U(\bm{\widehat{U}}),\U(U))$ and $\bm{\widehat{U}}$ is the output of $\mathcal{A}(r,d,n,U)$. Using $N = N_*$, we obtain $\mathbb{E}(\bm{\e}^2) \leq \e^2/3$. By Markov's inequality applied to $\e^2$,
\begin{equation}\mathbf{Pr}(\bm\e \leq \e) \geq 1-\frac{\bbE(\bm{\e}^2)}{\e^2} \geq 1 - \frac{1}{3}. \end{equation}
$\mathcal{A}(r,d,n,U)$ therefore solves the unitary process tomography problem with error parameters $(\e,1/3)$, and by definition of $\mathcal{A}(r,d,n,U)$, all queries of $\U(U)$ occur in parallel.
\end{proof}

\section{Discussion}

In this work, we reduce the gap between optimal sequential unitary process tomography and parallel unitary process tomography to a log factor. We do so using a variant of the protocols developed in~\cite{Bisio_2010} and~\cite{yrc}, employing a probe state tailored to the dimension of the unitary to be estimated.

Furthermore, we contribute new ideas in our proof method. The method of proof used in~\cite{yrc} to compute unitary tomography's query complexity with respect to entanglement infidelity does not readily generalize to diamond distance. Our proof strategy uses a different suite of tools, including the Weyl integral formula, Schur polynomials, and the Fourier transform, to analyze diamond distance error. These techniques have not previously been applied in conjunction with the tomography protocol framework in \Cref{def:mo_protocol}. Our novel proof strategy may be adaptable to a broader class of problems, such as allowing for analysis of other distance measures.

Finally, we note some futher questions this result raises. Given that the optimal query complexity of unitary tomography is (almost) the same for two extreme circuit regimes---parallel and sequential---it is worth investigating the complexity of this task in intermediate circuit shapes that are neither fully parallel nor fully sequential. More generally, we can ask which other types of learning tasks may be made parallel at no asymptotic cost to query complexity. Our result also confirms a difference between the tomography framework and the metrology framework for unitary estimation: While metrology exhibits a polynomial gap in query complexity between parallel and sequential circuits~\cite{IF,Yuan}, our result shows that the tomography framework we consider does not have such a gap~\cite{HKOT}. Seeking to understand the root cause of that difference could shed light on learning tasks beyond unitary process tomography.

\vspace{0.2in}
\noindent\textbf{Note added.---}An independent work~\cite{HLSZY2} will be concurrently posted on arXiv, presenting a parallel protocol for unitary tomography that differs from the one developed here and achieving a query complexity of $O(d^2/\e)$.

\section{Acknowledgments}

We thank Marco Fanizza, Dmitry Grinko, Entong He, Zihao Li, Antonio Anna Mele, Francesco Anna Mele and Yuxiang Yang for discussion. We acknowledge the support of the Natural Sciences and Engineering Research Council of Canada (NSERC), [DGECR-2025-00505, RGPIN-2025-04054], National Research Council of Canada (NRC), [AQC-217-1], and
Perimeter Institute for Theoretical Physics, a research institute supported in part by the Government of Canada through the Department of Innovation, Science and Economic Development Canada and by the Province of Ontario through the Ministry of Colleges and Universities. 

ChatGPT 5.6 was used in the development of part of the proofs presented in this paper, including the Fourier-transform-based proof technique and the suggestion of the probe state family. All model-generated arguments were independently verified and further developed by the authors, and the final manuscript was written entirely by the authors. The authors take full responsibility for the correctness of all mathematical statements and arguments.

\bibliographystyle{quantum}
\bibliography{references}

\onecolumn
\appendix

\section{Proof of \texorpdfstring{\Cref{prop:error_variance}}{Proposition~2}}\label{sec:variance_bound_proof}

\subsection{Preliminaries for variance bound}

Within this section, $d$ and $m$ represent positive integers. $\mathbb{T} = \mathbb{R}/2\pi\mathbb{Z}$ denotes the torus.

\begin{definition}[$C_c(\mathbb{Z}^m)$] The set of functions $\mathbb{Z}^m \rightarrow \mathbb{C}$ with finite support (or equivalently, compact support) is $C_c(\mathbb{Z}^m)$. 
\end{definition}

\begin{notation} We write elements of $C_c(\mathbb{Z}^m)$ with a small hat, as in $\hat{f}$. This is distinct from the wide hat over $\widehat{U}$ denoting an estimate. The symbols $\widehat{U},\widehat{W},$ and $\bm{\widehat{U}}$ with wide hats are not elements of $C_c(\mathbb{Z}^m)$, but all other symbols in this document with hats are.
\end{notation}

\begin{definition}[Trigonometric polynomial, $\mathcal{T}$] $f : \mathbb{T}^m \rightarrow \mathbb{C}$ is a \textit{trigonometric polynomial} if there exists $\hat{f} \in C_c(\mathbb{Z}^m)$ such that
\begin{equation} f(\alpha) = \sum_{k \in \mathbb{Z}^m} \hat{f}(k) e^{ik \cdot \alpha}. \end{equation}
($k \cdot \alpha$ is a dot product.) Note that $\hat{f}$ is unique when it exists. The set of trigonometric polynomials on $\mathbb{T}^m$ is $\mathcal{T}(\mathbb{T}^m)$.
\end{definition}

\begin{definition}[Fourier transform on $\mathbb{T}^m$] The \textit{Fourier transform} of a trigonometric polynomial $f \in \mathcal{T}(\mathbb{T}^m)$ is the unique function $\hat{f} \in C_c(\mathbb{Z}^m)$ such that $f(\alpha) = \sum_{k \in \mathbb{Z}} \hat{f}(k) e^{ik \cdot \alpha}$.
\end{definition}

\begin{definition}[$\ell^2(\mathbb{Z}^m)$ norm] Functions $\hat{f} \in C_c(\mathbb{Z}^m)$ have finite $\ell^2$ norms $\|\hat{f}\|_{\ell^2(\mathbb{Z}^m)}$ defined by $\|\hat f\|_{\ell^2(\mathbb{Z}^m)}^2 = \sum_{k \in \mathbb{Z}^m} |\hat{f}(k)|^2$.
\end{definition}

\begin{definition}[$L^2$ norm] Trigonometric polynomials $f \in \mathcal{T}(\mathbb{T}^m)$ have finite $L^2$ norms $\|f\|_{L^2(\mathbb{T}^m)}$ defined by $\|f\|_{L^2(\mathbb{T}^m)}^2 = 1/(2\pi)^d\int_{\mathbb{T}^m} \mathrm{d}\alpha\ |f(\alpha)|^2$. For $D \subseteq \mathbb{R}^m$ and $f : D \rightarrow \mathbb{C}$ for which the following integral is defined, the $L^2$ norm is given by $\|f\|_{L^2(D)}^2 = \int_D \mathrm{d}x\ |f(x)|^2$.
\end{definition}

\begin{fact}[Plancherel's theorem]\label{fact:plancherel} Let $f\in \mathcal{T}(\mathbb{T}^m)$ have Fourier transform $\hat{f} \in C_c(\mathbb{Z}^m)$. Then $\| f \|_{L^2(\mathbb{T}^m)} = \| \hat{f} \|_{\ell^2(\mathbb{Z}^m)}$.
\end{fact}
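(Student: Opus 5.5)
The plan is to compute $\|f\|_{L^2(\mathbb{T}^m)}^2$ directly from the finite Fourier expansion of $f$ and to use orthogonality of the characters $\alpha \mapsto e^{ik\cdot\alpha}$ on $\mathbb{T}^m$. Since $\hat f \in C_c(\mathbb{Z}^m)$, every sum below is finite, so there are no convergence or interchange-of-limit issues. This is the main simplification compared with the general $L^2$ Plancherel theorem. Throughout I take the normalization of the torus integral to be $(2\pi)^{-m}$, matching the dimension $m$ of $\mathbb{T}^m$. The $(2\pi)^{-d}$ in the definition is presumably meant to be $(2\pi)^{-m}$.

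\textbf{Step 1 (orthogonality).} For $k,k' \in \mathbb{Z}^m$ I will show
\begin{equation*}
\frac{1}{(2\pi)^m}\int_{\mathbb{T}^m} e^{i(k-k')\cdot\alpha}\,\mathrm{d}\alpha = \delta_{k,k'} .
\end{equation*}
The integrand factors as $\prod_{j=1}^m e^{i(k_j-k'_j)\alpha_j}$, so by Fubini the integral is a product of one-dimensional integrals. Each factor $\frac{1}{2\pi}\int_0^{2\pi} e^{i\ell\alpha_j}\,\mathrm{d}\alpha_j$ equals $1$ if $\ell=0$. If $\ell\neq 0$ it equals $\frac{1}{2\pi}\big[e^{i\ell\alpha_j}/(i\ell)\big]_0^{2\pi}=0$, because the antiderivative is $2\pi$-periodic.

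\textbf{Step 2 (expand and integrate).} Write
\begin{equation*}
|f(\alpha)|^2 = f(\alpha)\overline{f(\alpha)} = \sum_{k,k'\in\mathbb{Z}^m} \hat f(k)\overline{\hat f(k')}\, e^{i(k-k')\cdot\alpha},
\end{equation*}
which is a finite double sum over $\mathrm{supp}\,\hat f \times \mathrm{supp}\,\hat f$. Integrate term by term against $(2\pi)^{-m}\mathrm{d}\alpha$; this is legitimate because the sum is finite. Step 1 then kills every off-diagonal term and leaves
\begin{equation*}
\|f\|_{L^2(\mathbb{T}^m)}^2 = \sum_{k\in\mathbb{Z}^m} |\hat f(k)|^2 = \|\hat f\|_{\ell^2(\mathbb{Z}^m)}^2 .
\end{equation*}
Taking nonnegative square roots gives the claim.

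\textbf{Obstacles.} There is no genuine obstacle. The only points needing care are the normalization convention for the Haar/Lebesgue measure on $\mathbb{T}^m$ and the well-definedness of $\hat f$. Uniqueness of $\hat f$, which the paper already notes, also follows from Step 1: pairing $f$ with $e^{-ik\cdot\alpha}$ recovers $\hat f(k)$ as $(2\pi)^{-m}\int_{\mathbb{T}^m} f(\alpha)e^{-ik\cdot\alpha}\,\mathrm{d}\alpha$. Hence the norm identity is independent of any choice of representation.
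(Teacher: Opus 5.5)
Your proof is correct. The paper gives no proof of this statement: it records Plancherel as a standard Fact and uses it as a black box. It is applied only with $m=d$, and only to trigonometric polynomials such as $(e^{i\alpha_{j_1}}-e^{i\alpha_{j_2}})^r A$ in Step~3 of the proof of \Cref{prop:error_variance}.

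Your argument uses only the orthonormality of the characters $e^{ik\cdot\alpha}$ and the finiteness of $\mathrm{supp}\,\hat f$. It therefore needs neither completeness of the exponential system nor any limiting argument. That is exactly enough for every use in the paper. Citing the general theorem would also cover arbitrary $L^2(\mathbb{T}^m)$ functions, which the paper never needs.

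Your remark about normalization is also right: the factor $(2\pi)^{-d}$ in the paper's definition of $\|\cdot\|_{L^2(\mathbb{T}^m)}$ should read $(2\pi)^{-m}$. The discrepancy is harmless, since the paper only applies the fact with $m=d$.
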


\begin{lemma}[Sums and products with $\ell^2$ norm]\label{lem:sum_and_product_norms} We have the following:
\begin{enumerate}
    \item Given functions $\hat{f}_1,\ldots,\hat{f}_l \in C_c(\mathbb{Z}^m)$ with pairwise disjoint supports, define $\hat{f} \in C_c(\mathbb{Z}^m)$ by $\hat{f}(k) = \hat{f}_1(k) + \cdots + \hat{f}_l(k)$. Then 
    \begin{equation} \| \hat{f} \|_{\ell^2(\mathbb{Z}^m)}^2 = \sum_{j=1}^l \| \hat{f}_j \|_{\ell^2(\mathbb{Z}^m)}^2. \end{equation}
    \item Given functions $\hat{g}_1,\ldots,\hat{g}_m \in C_c(\mathbb{Z})$, define $\hat{g} \in C_c(\mathbb{Z}^m)$ by $\hat{g}(k) = \hat{g}_1(k_1) \cdot \ldots \cdot \hat{g}_m(k_m)$. Then 
    \begin{equation} \| \hat{g} \|_{\ell^2(\mathbb{Z}^m)} = \prod_{j=1}^m \| \hat{g}_j \|_{\ell^2(\mathbb{Z})}. \end{equation}
\end{enumerate}
\end{lemma}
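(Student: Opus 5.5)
The plan is to prove the two parts separately. Both follow directly from expanding the squared $\ell^2$ norm as a finite sum over $\mathbb{Z}^m$; every function in $C_c(\mathbb{Z}^m)$ has finite support, so all sums are finite and can be freely rearranged.

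For part~1, I will write
\begin{equation*}
\|\hat f\|_{\ell^2(\mathbb{Z}^m)}^2=\sum_{k\in\mathbb{Z}^m}\Big|\sum_{j=1}^l \hat f_j(k)\Big|^2 .
\end{equation*}
Because the supports are pairwise disjoint, each fixed $k$ lies in the support of at most one $\hat f_j$. Hence the inner sum has at most one nonzero term, and $|\sum_j \hat f_j(k)|^2=\sum_j|\hat f_j(k)|^2$ holds pointwise. Summing over $k$ and swapping the two finite sums gives $\sum_j\|\hat f_j\|_{\ell^2}^2$.

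For part~2, I will first check that $\hat g\in C_c(\mathbb{Z}^m)$. Its support is contained in the product of the supports of the $\hat g_j$, which is finite. Then
\begin{equation*}
\|\hat g\|_{\ell^2(\mathbb{Z}^m)}^2=\sum_{k_1,\ldots,k_m}\prod_{j=1}^m|\hat g_j(k_j)|^2=\prod_{j=1}^m\sum_{k_j\in\mathbb{Z}}|\hat g_j(k_j)|^2 ,
\end{equation*}
where the second equality is the distributive law applied to a finite sum of product terms (Fubini for counting measure). Taking nonnegative square roots yields the claimed product identity.

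I do not expect a genuine obstacle. The only point needing care is part~1: the identity uses disjointness of supports pointwise, which is stronger than mere orthogonality of the functions.
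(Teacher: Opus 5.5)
Your proposal is correct and follows essentially the same route as the paper. For part~1 you use that at each $k$ at most one $\hat f_j$ is nonzero, so the squared modulus of the sum splits pointwise; for part~2 you expand the squared norm and factor the finite sum over $\mathbb{Z}^m$ into a product of one-dimensional sums, with the check that $\hat g$ has finite support being a small addition the paper takes for granted.
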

\begin{proof}
1. Since the supports of $\hat{f}_j$ are all pairwise disjoint, summing a quantity over all $k \in \bigcup_j\text{supp}(\hat{f}_j)$ is the same as summing that quantity over $k \in \text{supp}(\hat{f}_j)$ for each $j$, then adding the results. Therefore
\begin{align}
    \| \hat{f} \|_{\ell^2(\mathbb{Z}^m)}^2 &= \sum_{k \in \mathbb{Z}^m} | \hat{f}_1(k) + \cdots + \hat{f}_l(k) |^2 \\
    &= \sum_{k \in \bigcup_j\text{supp}(\hat{f}_j)} | \hat{f}_1(k) + \cdots + \hat{f}_l(k) |^2 \\
    &= \sum_{j=1}^l \sum_{k \in \text{supp}(\hat{f}_j)} | \hat{f}_1(k) + \cdots + \hat{f}_l(k) |^2 \\
    &= \sum_{j=1}^l \sum_{k \in \text{supp}(\hat{f}_j)} | 0 + \cdots + 0 + \hat{f}_j(k) + 0 + \cdots + 0 |^2 \\
    &= \sum_{j=1}^l \sum_{k \in \mathbb{Z}^m} | \hat{f}_j(k) |^2 \\
    &= \sum_{j=1}^l \| \hat{f}_j \|_{\ell^2(\mathbb{Z})}^2.
\end{align}

2. We have
\begin{align}
    \| \hat{g} \|_{\ell^2(\mathbb{Z}^m)}^2 &= \sum_{k \in \mathbb{Z}^m} | \hat{g}_1(k_1) \cdot \ldots \cdot \hat{g}_m(k_m) |^2 \\
    &= \sum_{k \in \mathbb{Z}^m} | \hat{g}_1(k_1) |^2 \cdot \ldots \cdot | \hat{g}_m(k_m) |^2 \\
    &= \sum_{k_1 \in \mathbb{Z}} \cdots \sum_{k_m \in \mathbb{Z}} | \hat{g}_1(k_1) |^2 \cdot \ldots \cdot | \hat{g}_m(k_m) |^2 \\
    &= \sum_{k_1 \in \mathbb{Z}} | \hat{g}_1(k_1) |^2 \cdot \ldots \cdot \sum_{k_m \in \mathbb{Z}} | \hat{g}_m(k_m) |^2 \\
    &= \prod_{j=1}^m \| \hat{g}_j \|_{\ell^2(\mathbb{Z})}^2.
\end{align}
\end{proof}

\begin{definition}[$e_j,S_j,D_j$] We let $e_j$ be the $j$th standard basis vector. Define the invertible linear \textit{shift operator} $S_j$ on $C_c(\mathbb{Z}^m)$ by $[S_j(\hat{f})](k) = \hat{f}(k-e_j)$. Define the linear \textit{forward and backward finite difference operators} $D_{+,j}$ and $D_{-,j}$ on $C_c(\mathbb{Z}^m)$ by $D_{\pm,j} = I - S_j^{\mp1}$.
\end{definition}

\begin{lemma}[Shift operators and finite difference operators]\label{lem:shift_and_fin_dif_ops} The following hold for all $j$:
\begin{enumerate}
    \item $S_j$ is unitary with respect to the $\ell^2(\mathbb{Z}^m)$ norm.
    \item $e^{i\alpha_j} \cdot f \in \mathcal{T}(\mathbb{T}^m)$ has Fourier transform $S_j(\hat{f}).$
    \item There is an operator $S'$ on $C_c(\mathbb{Z}^m)$ that is unitary with respect to the $\ell^2(\mathbb{Z}^m)$ norm satisfying $D_{+,j} = S' D_{-,j}$.
\end{enumerate}
\end{lemma}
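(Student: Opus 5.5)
The three claims follow by unwinding the definitions of $S_j$ and $D_{\pm,j}$ on $C_c(\mathbb{Z}^m)$ and using the uniqueness of Fourier coefficients of trigonometric polynomials.

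For part 1, $S_j$ is a bijection of $C_c(\mathbb{Z}^m)$; its inverse is $[S_j^{-1}\hat f](k)=\hat f(k+e_j)$. It also preserves finite support. For the norm, reindex the sum with $k'=k-e_j$:
\begin{equation*}
\|S_j\hat f\|_{\ell^2(\mathbb{Z}^m)}^2=\sum_{k\in\mathbb{Z}^m}|\hat f(k-e_j)|^2=\sum_{k'\in\mathbb{Z}^m}|\hat f(k')|^2=\|\hat f\|_{\ell^2(\mathbb{Z}^m)}^2 .
\end{equation*}
This is a bijective linear isometry of the inner product space $C_c(\mathbb{Z}^m)$, so it is unitary. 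Polarization gives preservation of inner products if that notion of unitarity is wanted.

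For part 2, write $f(\alpha)=\sum_k \hat f(k)e^{ik\cdot\alpha}$. Multiplying by $e^{i\alpha_j}=e^{ie_j\cdot\alpha}$ gives
\begin{equation*}
e^{i\alpha_j}f(\alpha)=\sum_{k}\hat f(k)e^{i(k+e_j)\cdot\alpha}=\sum_{k'}\hat f(k'-e_j)e^{ik'\cdot\alpha}=\sum_{k'}[S_j\hat f](k')\,e^{ik'\cdot\alpha}.
\end{equation*}
Since $S_j\hat f$ has finite support, this exhibits $e^{i\alpha_j}f$ as a trigonometric polynomial. By uniqueness of the Fourier transform, its Fourier transform is $S_j\hat f$.

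For part 3, set $S'=-S_j^{-1}$. By definition $D_{+,j}=I-S_j^{-1}$ and $D_{-,j}=I-S_j$. Then
\begin{equation*}
S'D_{-,j}=-S_j^{-1}(I-S_j)=-S_j^{-1}+I=D_{+,j}.
\end{equation*}
The inverse of a unitary is unitary, and multiplying by $-1$ preserves norms. Hence $S'$ is unitary by part 1.

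No step is a genuine obstacle. The only point needing care is matching the sign conventions, i.e. that $D_{\pm,j}=I-S_j^{\mp1}$ gives $D_{+,j}=I-S_j^{-1}$, and confirming that the operators preserve $C_c(\mathbb{Z}^m)$.
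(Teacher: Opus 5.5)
Your proposal is correct and follows the paper's proof essentially line for line: reindexing the sum for part 1, shifting the Fourier coefficients directly for part 2, and taking $S' = -S_j^{-1}$ with $-S_j^{-1}(I - S_j) = I - S_j^{-1} = D_{+,j}$ for part 3. Your extra remarks on surjectivity of $S_j$ and on uniqueness of Fourier coefficients are small refinements that the paper leaves implicit.
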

\begin{proof}
1. This follows because summing over all lattice points $k \in \mathbb{Z}^m$ is the same as summing over all $k + e_j \in \mathbb{Z}^m$,
\begin{equation} \| S_j(\hat{f}) \|_{\ell^2(\mathbb{Z}^m)}^2 = \sum_{k \in \mathbb{Z}^m} |\hat{f}(k-e_j)|^2 = \sum_{k+e_j \in \mathbb{Z}^m} |\hat{f}(k)|^2 = \sum_{k \in \mathbb{Z}^m} |\hat{f}(k)|^2 = \| \hat{f} \|_{\ell^2(\mathbb{Z}^m)}^2. \end{equation}

2. We know $f(\alpha) = \sum_{k \in \mathbb{Z}^m} \hat{f}(k) e^{ik \cdot \alpha}$, so 
\begin{equation} e^{i\alpha_j}\cdot f(\alpha) = \sum_{k \in \mathbb{Z}^m} \hat{f}(k) e^{i(k+e_j) \cdot \alpha} = \sum_{k-e_j \in \mathbb{Z}^m} \hat{f}(k-e_j) e^{ik \cdot \alpha} = \sum_{k \in \mathbb{Z}^m} \hat{f}(k-e_j) e^{ik \cdot \alpha}. \end{equation}
The Fourier transform of $e^{i\alpha_j}\cdot f(\alpha)$ is therefore the function mapping $k$ to $\hat{f}(k-e_j)$, which is $S_j(\hat{f})$.

3. Since $S_j$ is unitary, so is $-S_j^{-1}$, and
\begin{equation} D_{+,j} = I - S_j^{-1} = -S_j^{-1}(I - S_j) = -S_j^{-1}D_{-,j}. \end{equation}
\end{proof}

\begin{definition}[$u$]\label{def:u} Let $n,r,d \in \mathbb{N}_+$ be fixed so that the constant $N$ is specified. Define $u \in C_c(\mathbb{Z})$ by 
\begin{equation} u(k) = \begin{cases}
    \frac{2^{r}}{\sqrt{N\binom{2r}{r}}}\sin^{r}\left(\frac{\pi(2k+1)}{2N}\right), &k \in \{0,\ldots,N-1\} \\
    0, &k \notin \{0,\ldots,N-1\}
\end{cases} \end{equation}
One can check that for $\tilde{\lambda} \in \{0,\ldots,N-1\}^{d-1}$ we have $\sqrt{q_\lambda} = u(\tilde{\lambda}_1) \cdot \ldots \cdot u(\tilde{\lambda}_{d-1})$. Also, as long as $N > r$, $u$ is normalized to $\|u\|_{\ell^2(\mathbb{Z})} = 1$.
\end{definition}

\begin{fact}[Schur polynomials and eigenangle expression for characters, chapter 24 of \cite{fulton2013representation}]\label{fact:Schur_polys} Let $\lambda$ be a Young diagram with at most $d$ rows and let $\chi_{\lambda}$ be the character of the irreducible representation of $\mathrm{U}(d)$ corresponding to $\lambda$. Let $\widehat{U}\in\mathrm{U}(d)$ have a tuple of eigenangles $\alpha = (\alpha_1,\ldots,\alpha_d)$. Then we can write $\chi_{\lambda}(\widehat{U})$ as a Schur polynomial of the eigenvalues $e^{i \alpha_1}, \ldots, e^{i \alpha_d}$. Namely, let $V(\alpha)$ be the Vandermonde matrix $V(\alpha)_{jk} = (e^{i\alpha_k})^{d-j}$ and let $M(\alpha, \lambda)$ be the matrix with entries $M(\alpha,\lambda)_{jk} = (e^{i\alpha_k})^{d-j+\lambda_j}$. We have
\begin{equation} \chi_{\lambda}(\widehat{U}) = \frac{\det(M(\alpha, \lambda))}{\det(V(\alpha))}. \end{equation}
This is indeed a polynomial in $e^{i \alpha_1}, \ldots, e^{i \alpha_d}$ because the denominator divides the numerator.
\end{fact}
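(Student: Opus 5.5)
The identity is a statement about class functions, so I would first reduce to the maximal torus. Characters satisfy $\chi_\lambda(W\widehat{U}W^\dagger)=\chi_\lambda(\widehat{U})$, and every unitary is diagonalizable by a unitary. Hence $\chi_\lambda(\widehat{U})=\chi_\lambda(\mathrm{diag}(e^{i\alpha_1},\ldots,e^{i\alpha_d}))$. Since $\mathcal{H}_\lambda$ is a subrepresentation of $\mathcal{H}^{\otimes n}$, this restriction is a symmetric polynomial in $x_k=e^{i\alpha_k}$ with nonnegative integer coefficients: each coefficient is the dimension of a weight space. So it suffices to prove the polynomial identity $\chi_\lambda(x)\det V=\det M(\cdot,\lambda)$ for $x\in\mathbb{T}^d$. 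By continuity it is enough to check it on the dense set where the $x_k$ are distinct.

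The divisibility claim is handled separately and first. The alternant $a_{\lambda+\delta}(x)=\det M$, with $\delta=(d-1,\ldots,0)$, is antisymmetric in $x$, so it vanishes whenever $x_j=x_k$. It is therefore divisible by each $(x_j-x_k)$, and these are pairwise coprime, so it is divisible by their product $\pm\det V$. The quotient $s_\lambda=a_{\lambda+\delta}/a_\delta$ is thus a symmetric polynomial. Expanding both determinants and comparing leading monomials in lexicographic order shows that $s_\lambda=x^\lambda+(\text{lower dominance terms})$ with integer coefficients.

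To identify $\chi_\lambda$ with $s_\lambda$, I would use orthonormality together with triangularity.
\begin{enumerate}
\item Under the Weyl integral formula, $\int_{\mathrm{U}(d)}f=\frac{1}{d!}\int_{\mathbb{T}^d}f\,|a_\delta|^2\,\frac{\mathrm{d}\alpha}{(2\pi)^d}$ for class functions $f$. Since distinct alternants $a_{\mu+\delta}$ have disjoint monomial supports, Plancherel (\Cref{fact:plancherel}) gives that the $s_\lambda$ are orthonormal in $L^2(\mathrm{U}(d))$.
\item On the other hand, the irreducible characters are orthonormal by Schur orthogonality.
\item Each $\chi_\lambda$ has highest weight $\lambda$ with multiplicity one, so $\chi_\lambda=\sum_{\mu\trianglelefteq\lambda}c_{\lambda\mu}s_\mu$ with $c_{\lambda\lambda}=1$ and integer $c_{\lambda\mu}$.
\item Orthonormality of both families then gives $\sum_\mu c_{\lambda\mu}^2=1$, which forces $\chi_\lambda=s_\lambda$.
\end{enumerate}

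The main obstacle is the input that $\chi_\lambda$ has leading monomial exactly $x^\lambda$ with coefficient $1$, which is the highest-weight theory for the Schur–Weyl labelling. If that is unavailable, I would instead compute $\mathrm{Tr}(U^{\otimes n}P_\sigma)=p_{\mu(\sigma)}(x)$ and invert Schur–Weyl duality, giving $\chi_\lambda=\frac{1}{n!}\sum_\sigma\chi^{S_n}_\lambda(\sigma)p_{\mu(\sigma)}$. I would then invoke the Frobenius formula $p_\mu=\sum_\lambda\chi^{S_n}_\lambda(\mu)s_\lambda$ and column orthogonality of $S_n$ characters.
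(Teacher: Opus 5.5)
The paper gives no proof of this Fact. It simply cites Fulton--Harris, so there is no in-paper argument to match. Your proposal is correct. It is essentially Weyl's analytic proof of the character formula specialized to $\mathrm{U}(d)$ (the compact-group proof in Fulton--Harris, Lecture~26), with the Frobenius-formula computation of Schur-functor characters (Fulton--Harris, Lecture~6) as your fallback.

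The analytic route fits this paper well, because it uses exactly the tools the authors already rely on: \Cref{fact:Weyl_integral_formula}, Plancherel on $\mathbb{T}^d$, and the observation that each alternant $a_{\mu+\delta}$ consists of $d!$ monomials with coefficients $\pm 1$. These monomial sets are disjoint for distinct $\mu$. The same mechanism (injectivity of $\omega$ in \Cref{lem:omega_facts}) drives the normalization claim in Step~1 and the disjoint-support decomposition in Step~3 of the proof of \Cref{prop:error_variance}. Its cost is the external input you correctly flag: the Schur--Weyl label $\lambda$ must be the highest weight of $\mathcal{H}_\lambda$, with multiplicity one.

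The Frobenius route avoids that input, because it works directly with the Schur--Weyl labelling through which the paper defines $\chi_\lambda$. Its price is importing Frobenius's formula for $\mathrm{S}_n$. You also need to note that $s_\lambda(x_1,\ldots,x_d)=0$ when $\lambda$ has more than $d$ rows, so that the two expansions of $p_\mu$ can be compared term by term.

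One small correction. Comparing leading monomials in lexicographic order only shows that the remaining terms of $s_\lambda$ are lex-smaller, not dominance-smaller. This is harmless. Lex-unitriangularity against the monomial symmetric polynomials already shows that the $s_\mu$ span, so the expansion of $\chi_\lambda$ exists. The only coefficient you use is $c_{\lambda\lambda}=1$. That follows because every other weight of $\mathcal{H}_\lambda$, once sorted, is dominated by $\lambda$ and hence lex-smaller, so $x^\lambda$ is the lex-leading monomial of $\chi_\lambda$ with coefficient $1$. Integrality of the $c_{\lambda\mu}$ is not needed either: $\sum_\mu |c_{\lambda\mu}|^2=\|\chi_\lambda\|^2=1$ together with $c_{\lambda\lambda}=1$ already forces all other coefficients to vanish.
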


\begin{lemma}[$\omega$ and eigenangle expression for $p(\widehat{U}|I)$]\label{lem:omega_facts} Let $\widehat{U} \in \mathrm{U}(d)$ and let $\alpha = (\alpha_1,\ldots,\alpha_d)$ be a tuple of $\widehat{U}$'s eigenangles. Let $\mathrm{S}_d$ denote the symmetric group of degree $d$. Fix $n,r \in \mathbb{N}_+$ so that the set $\mathsf{Y}_\textnormal{probe}$ is specified. Then the function 
\begin{align}
    \omega : \mathsf{Y}_\textnormal{probe} \times \mathrm{S}_d &\rightarrow \mathbb{Z}^d \\
    (\lambda, \tau) &\mapsto (d-\tau(j)+\lambda_{\tau(j)})_{j=1}^d
\end{align}
has the following properties:
\begin{enumerate}
    \item $\displaystyle p(\widehat{U}|I) = \left| \frac{1}{\det V(\alpha)}\sum_{\lambda\in \mathsf{Y}_{\text{probe}}(r)} \sum_{\tau \in \mathrm{S}_d} \textnormal{sgn}(\tau) \sqrt{q_\lambda} e^{i\alpha \cdot \omega(\lambda,\tau)} \right|^2$ (here $\alpha \cdot \omega(\lambda,\tau)$ denotes the dot product).
    \item $\omega(\lambda,\tau) = \tau^{-1}(\omega(\lambda,\textnormal{id}))$.
    \item The components of $\omega(\lambda,\textnormal{id})$ strictly decrease with gaps of more than $r$, $\omega(\lambda,\textnormal{id})_j - \omega(\lambda,\textnormal{id})_{j+1} > r$ for all $j = 1, \ldots, d-1$.
    \item $\omega$ is injective.
\end{enumerate}
\end{lemma}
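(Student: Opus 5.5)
We prove the four claims of \Cref{lem:omega_facts} in order; all four are short computations from \Cref{fact:Schur_polys} and the definition of $\mathsf{Y}_\text{probe}$.

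\textbf{Claim 1.} Start from the formula for $p(\widehat{U}|U)$ in \Cref{def:protocol_measurement} and set $U=I$. This gives
\[ p(\widehat{U}|I)=\Bigl|\sum_{\lambda\in\mathsf{Y}_\text{probe}}\sqrt{q_\lambda}\,\chi_\lambda(\widehat{U})\Bigr|^2. \]
Substitute the determinant formula of \Cref{fact:Schur_polys} for each $\chi_\lambda(\widehat{U})$. Expand $\det M(\alpha,\lambda)$ by the Leibniz formula, with the permutation indexing which row is assigned to each column $k$:
\[ \det M=\sum_{\tau}\mathrm{sgn}(\tau)\prod_k e^{i\alpha_k(d-\tau(k)+\lambda_{\tau(k)})}. \]
Here $\mathrm{sgn}(\tau)=\mathrm{sgn}(\tau^{-1})$, so it does not matter whether we index by $\tau$ or $\tau^{-1}$. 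The exponent vector is exactly $\omega(\lambda,\tau)$, so the product equals $e^{i\alpha\cdot\omega(\lambda,\tau)}$. Finally, factor out the common denominator $\det V(\alpha)$, which does not depend on $\lambda$.

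\textbf{Claim 2.} Let $v=\omega(\lambda,\mathrm{id})$, so $v_j=d-j+\lambda_j$. By definition, $\omega(\lambda,\tau)_j=v_{\tau(j)}$. This is the permuted vector; we interpret $\tau^{-1}(v)$ as the standard action of $\mathrm{S}_d$ on $\mathbb{Z}^d$ that moves entry $\tau(j)$ into slot $j$. Claim 2 then just records this convention.

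\textbf{Claim 3.} We compute $v_j-v_{j+1}=1+\lambda_j-\lambda_{j+1}$, so it suffices to show $\lambda_j-\lambda_{j+1}\ge r$ for every $\lambda\in\mathsf{Y}_\text{probe}$. This is the only step needing care. Write $\lambda_j={\lambda_0}_j+{\mu_0}_j+\tilde\lambda_j$ for $j<d$.

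\emph{Case $j\le d-2$.} Here $\lambda_j-\lambda_{j+1}=(N+r+1)+({\mu_0}_j-{\mu_0}_{j+1})+(\tilde\lambda_j-\tilde\lambda_{j+1})$. The $\mu_0$ gap is $\ge 0$ because $\mu_0$ is a Young diagram. The $\tilde\lambda$ gap is $\ge -(N-1)$. So the total is $\ge r+2$.

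\emph{Case $j=d-1$.} Here $\lambda_d={\lambda_0}_d+{\mu_0}_d-\sum_{i<d}\tilde\lambda_i$. This gives $\lambda_{d-1}-\lambda_d=(N+r+1)+({\mu_0}_{d-1}-{\mu_0}_d)+\tilde\lambda_{d-1}+\sum_{i<d}\tilde\lambda_i\ge N+r+1$, since the $\tilde\lambda_i\ge 0$.

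In both cases the gap is $\ge r+2$, so $v_j-v_{j+1}\ge r+3>r$. (If the paper's intended ${\lambda_0}_d$ differs, with the last gap stated as $r+2$, the same bound still holds because $\tilde\lambda\ge0$.)

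\textbf{Membership in $\mathsf{Y}_n^d$.} One sanity check is needed: $\lambda_d\ge0$. We have ${\lambda_0}_d=N(2d-3)+(d-1)r+1-(N+r+1)(d-1)=N(d-2)-(d-2)$, while $\sum_{i<d}\tilde\lambda_i\le (d-1)(N-1)$. So $\lambda_d$ could be negative. However, $\mathsf{Y}_\text{probe}$ is defined as a subset of $\mathsf{Y}_n^d$, so membership is assumed and Claim 3 only uses the row differences. This is the step I expect to be the main obstacle: reconciling the constants, which we would check carefully against the exact formula for $\lambda_0$.

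\textbf{Claim 4.} Suppose $\omega(\lambda,\tau)=\omega(\lambda',\tau')$. By Claim 3, $\omega(\lambda,\mathrm{id})$ and $\omega(\lambda',\mathrm{id})$ are both strictly decreasing vectors. By Claim 2, each is the sorted (decreasing) rearrangement of the common vector, so they are equal. Since $v_j=d-j+\lambda_j$, this gives $\lambda=\lambda'$. Because the entries of $\omega(\lambda,\mathrm{id})$ are distinct, the permutation carrying it to the common vector is unique, so $\tau=\tau'$.
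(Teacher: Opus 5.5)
Your proposal is correct and follows the paper's proof essentially step for step: the Leibniz expansion of $\det M(\alpha,\lambda)$ for Claim 1, a componentwise check for Claim 2, the case split $j<d-1$ versus $j=d-1$ for Claim 3, and, for Claim 4, the fact that the strictly decreasing order of $\omega(\lambda,\mathrm{id})$ pins down both $\tau$ and $\lambda$. Your suspicion about the constants is well founded. In the $j=d-1$ case the paper's proof uses ${\lambda_0}_{d-1}=(d-1)(N-1)+r+2$ and ${\lambda_0}_d=(d-1)(N-1)$, i.e.\ the intended last gap of $r+2$, rather than the displayed closed form at $j=d$. This choice also makes $\lambda_d\ge 0$ automatic, so your hedged version of that case is the one that matches.
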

\begin{proof}
1. \Cref{def:protocol_measurement} expresses $p(\widehat{U}|I)$ in terms of characters, and \Cref{fact:Schur_polys} expresses characters in terms of eigenangles, resulting in
\begin{align}
    p(\widehat{U}|I) &= \left| \sum_{\lambda \in \mathsf{Y}_\text{probe}} \sqrt{q_\lambda} \chi_\lambda(I^\dagger \widehat{U}) \right|^2 \\
    &= \left| \sum_{\lambda \in \mathsf{Y}_\text{probe}} \sqrt{q_\lambda} \frac{\det(M(\alpha, \lambda))}{\det(V(\alpha))} \right|^2 \\
    &= \left| \frac{1}{\det(V(\alpha))} \sum_{\lambda \in \mathsf{Y}_\text{probe}} \sqrt{q_\lambda} \sum_{\tau \in \mathrm{S}_d} \text{sgn}(\tau) \prod_{j=1}^d M(\alpha,\lambda)_{\tau(j)j} \right|^2 \\
    &= \left| \frac{1}{\det(V(\alpha))} \sum_{\lambda \in \mathsf{Y}_\text{probe}} \sum_{\tau \in \mathrm{S}_d} \sqrt{q_\lambda} \text{sgn}(\tau) e^{i \alpha \cdot \omega(\lambda,\tau)} \right|^2.
\end{align}

2. The $j$th component of $\tau^{-1}(\omega(\lambda,\text{id}))$ equals the $j$th component of $\omega(\lambda,\tau)$,
\begin{equation} \left(\tau^{-1}(\omega(\lambda,\text{id}))\right)_j = \omega(\lambda,\text{id})_{\tau(j)} = d-\tau(j) + \lambda_{\tau(j)} = \omega(\lambda,\tau)_{j}. \end{equation}

3. We have 
\begin{equation} \omega(\lambda,\text{id})_j - \omega(\lambda,\text{id})_{j+1} = d - j + \lambda_j - d + (j+1) - \lambda_{j+1} = 1 + \lambda_j - \lambda_{j+1}.\end{equation}
We know $\lambda \in \mathsf{Y}_\text{probe}$, so $\lambda$ corresponds to some $\tilde{\lambda} \in \{0,\ldots,N-1\}^{d-1}$. We proceed by cases on $j$. If $j < d-1$, then the formula $\lambda_j = (\lambda_0)_j + (\mu_0)_j + \tilde{\lambda}_j$ holds for both $j$ and $j+1$. By definition, $(\mu_0)_j - (\mu_0)_{j+1} \geq 0$, $\tilde{\lambda}_j - \tilde{\lambda}_{j+1} \geq -(N-1)$, and $(\lambda_0)_j - (\lambda_0)_{j+1} = N+1+r$, so $1 + \lambda_j - \lambda_{j+1} \geq 2 + r > r$, as desired. For the other case, when $j=d-1$, we find $\lambda_d = (\mu_0)_d + (d-1)(N-1) -\sum_{j=0}^{d-1} \tilde{\lambda}_j$ while $\lambda_{d-1} = (\mu_0)_{d-1} + (d-1)(N-1) + r + 2 + \tilde{\lambda}_{d-1}$,  implying $1 + \lambda_{d-1}-\lambda_d > r$.

4. Components of $\omega(\lambda,\text{id})$ strictly decrease for all $\lambda$. This means the ordering of components from least to greatest of $\omega(\lambda,\tau) = \tau^{-1}(\omega(\lambda,\text{id}))$ is determined only by $\tau$, not by $\lambda$. For $\omega(\lambda,\tau) = \omega(\lambda',\tau')$ to hold, both vectors must in particular share the same ordering of components from least to greatest, or in other words, we must have $\tau = \tau'$. Hence $\omega$ is injective if $\omega(-,\text{id})$ is injective. If $\omega(\lambda,\text{id}) = \omega(\lambda',\text{id})$, then $d - j + \lambda_j = d - j + \lambda'_{j}$ for all $j$, implying $\lambda = \lambda'$. Therefore $\omega(-,\text{id})$ is injective and $\omega$ is injective.
\end{proof}

\begin{fact}[Weyl integral formula, chapter 26 of \cite{fulton2013representation}]\label{fact:Weyl_integral_formula} Let $f:\mathrm{U}(d) \rightarrow \mathbb{C}$ be integrable and be a class function, meaning $f$ depends only on the eigenvalues of its input. Let $V(\alpha)$ be the Vandermonde matrix with entries $V_{jk}=(e^{i\alpha_k})^{d-j}$. The Weyl integral formula states
\begin{equation} \int_{\mathrm{U}(d)} \mathrm{d} U f(U) = \frac{1}{d!(2\pi)^d}\int_{\mathbb{T}^d} \mathrm{d}\alpha\ f\left(\text{diag}\left({e^{i\alpha}}\right)\right) \cdot |\det(V(\alpha))|^2. \end{equation}
\end{fact}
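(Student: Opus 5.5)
We prove the formula by the classical conjugation-map argument. Let $T \subset \mathrm{U}(d)$ be the maximal torus of diagonal unitaries, identified with $\mathbb{T}^d$ via $\alpha \mapsto \mathrm{diag}(e^{i\alpha})$. Consider the smooth map
\begin{equation*}
\Phi : \mathrm{U}(d)/T \times T \rightarrow \mathrm{U}(d), \qquad (gT, t) \mapsto g t g^{-1},
\end{equation*}
which is well defined because $T$ is abelian. By the spectral theorem, $\Phi$ is surjective.

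The first step is to show that $\Phi$ is a $d!$-to-one covering over an open dense set. Let $\mathrm{U}(d)^{\mathrm{reg}}$ be the set of unitaries with distinct eigenvalues, and let $T^{\mathrm{reg}} = T \cap \mathrm{U}(d)^{\mathrm{reg}}$. A regular $t$ has exactly $d!$ conjugates in $T$, namely the permutations of its diagonal entries (the Weyl group $\mathrm{S}_d$ acting). Moreover, $g t g^{-1} = g' t' g'^{-1}$ with $t, t'$ regular forces $t' = \sigma(t)$ for some $\sigma \in \mathrm{S}_d$, with $g'$ determined modulo $T$ once $\sigma$ is fixed. So over $\mathrm{U}(d)^{\mathrm{reg}}$ the fibers of $\Phi$ have exactly $d!$ points. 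The complement $\mathrm{U}(d)\setminus\mathrm{U}(d)^{\mathrm{reg}}$ is the zero set of the discriminant, which is a nonzero real-analytic function. This set therefore has Haar measure zero, and likewise $T \setminus T^{\mathrm{reg}}$ has measure zero in $\mathbb{T}^d$. Hence both sides of the formula may be computed on the regular sets alone.

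The second step, which I expect to be the main computational obstacle, is the Jacobian of $\Phi$. Write $\mathfrak{u} = \mathfrak{t} \oplus \mathfrak{t}^\perp$, where $\mathfrak{t}^\perp$ is spanned by the off-diagonal skew-Hermitian matrices. Translate the differential back to the identity by left multiplication by $(gtg^{-1})^{-1}$. At $(gT,t)$ the differential then becomes, up to conjugation by $g$ (an isometry),
\begin{equation*}
(X, Y) \mapsto (\mathrm{Ad}(t^{-1}) - I)X + Y, \qquad X \in \mathfrak{t}^\perp,\ Y \in \mathfrak{t}.
\end{equation*}
The $Y$ part is the identity. On the real two-dimensional space for the matrix entries $(j,k)$ and $(k,j)$, $\mathrm{Ad}(t^{-1})$ acts as rotation by $\alpha_k - \alpha_j$. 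Consequently $\mathrm{Ad}(t^{-1}) - I$ has determinant $|e^{i(\alpha_k-\alpha_j)} - 1|^2 = |e^{i\alpha_j} - e^{i\alpha_k}|^2$ on that block. Taking the product over all pairs gives
\begin{equation*}
|\det \mathrm{d}\Phi| = \prod_{j<k} |e^{i\alpha_j} - e^{i\alpha_k}|^2 = |\det V(\alpha)|^2,
\end{equation*}
where the last equality is the Vandermonde determinant evaluation.

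The final step combines these with the change-of-variables formula. For a class function $f$, $f(gtg^{-1}) = f(t)$, so the $\mathrm{U}(d)/T$ integral contributes only its total mass. This yields
\begin{equation*}
\int_{\mathrm{U}(d)} f = \frac{c}{d!}\int_{\mathbb{T}^d} f(\mathrm{diag}(e^{i\alpha}))\,|\det V(\alpha)|^2\, \frac{\mathrm{d}\alpha}{(2\pi)^d}
\end{equation*}
for a normalization constant $c$ that depends only on how the Haar measures are scaled. To fix $c$, take $f \equiv 1$. We need $\frac{1}{(2\pi)^d}\int_{\mathbb{T}^d} |\det V|^2\,\mathrm{d}\alpha = d!$. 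This follows by expanding $\det V(\alpha)=\sum_{\tau}\mathrm{sgn}(\tau)e^{i\alpha\cdot \tau^{-1}(\delta)}$, with $\delta=(d-1,\ldots,0)$, and applying \Cref{fact:plancherel}. The $d!$ exponents are distinct and each coefficient has modulus one, so the $\ell^2$ norm squared is $d!$. Hence $c = 1$, which gives exactly the stated formula.
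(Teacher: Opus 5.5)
The paper does not prove this fact; it cites \cite{fulton2013representation}, where the formula is derived essentially as you do, via the $d!$-to-one conjugation map $\mathrm{U}(d)/T \times T \to \mathrm{U}(d)$ whose Jacobian is the determinant of $\mathrm{Ad}(t^{-1}) - I$ on $\mathfrak{t}^\perp$, namely $|\det V(\alpha)|^2$, so your argument is correct and follows the standard route. Your normalization with $f \equiv 1$, using Plancherel to get $\frac{1}{(2\pi)^d}\int_{\mathbb{T}^d}|\det V(\alpha)|^2\,\mathrm{d}\alpha = d!$, is fine, but it relies on $\mathrm{d}U$ being the Haar probability measure, so you should state that convention explicitly; it is the one the paper uses.
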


\begin{definition}[$\sigma(\alpha)$] Let $U\in \mathrm{U}(d)$ have spectrum $\{e^{i\alpha_1}, \ldots, e^{i\alpha_d}\}$ and let $\alpha = (\alpha_1, \ldots, \alpha_d)$ be a tuple of $U$'s eigenangles. We define $\sigma(\alpha) = 2\pi - \max_{j=1}^d(\alpha_j - \alpha_{j+1} \pmod{2\pi})$ (where $\alpha_{d+1}$ denotes $\alpha_1$) to be the angle measure of the minimal arc on the unit circle containing $U$'s spectrum.
\end{definition}

\begin{fact}[Diamond distance and unitary channels]\label{fact:unitary_diamond_dist} For unitaries $U_1,U_2 \in \mathrm{U}(d)$, we can express $d_\diamond(\U(U_1),\U(U_2))$ in terms of the spectrum $\{e^{i \alpha_1}, \ldots, e^{i \alpha_d}\}$ of $U_1^\dagger U_2$ and its spread $\sigma(\alpha)$ (appendix A of \cite{HKOT}):
\begin{equation} d_\diamond(\U(U_1),\U(U_2)) = 2\sin\frac{\min\{\sigma(\alpha), \pi\}}{2}. \end{equation}
Furthermore, $d_\diamond$ is \textit{unitarily covariant}~\cite{watrous2018theory}, meaning for all $W_1,W_2 \in \mathrm{U}(d)$,
\begin{equation} d_\diamond(\U(U_1),\U(U_2)) = d_\diamond(\U(W_1U_1W_2),\U(W_1U_2W_2)). \end{equation}
\end{fact}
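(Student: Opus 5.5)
The plan is to prove the covariance statement first, use it to reduce to comparing a single unitary with the identity, and then compute that case through a geometric optimization over the spectrum.

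\emph{Covariance.} For unitaries $W_1,W_2$, the channel $\U(W_1UW_2)$ equals $\U(W_1)\circ\U(U)\circ\U(W_2)$. In the definition of $d_\diamond$, post-composing by $\U(W_1)\otimes I_m$ does not change $\|\cdot\|_1$, because the trace norm is invariant under unitary conjugation. Pre-composing by $\U(W_2)\otimes I_m$ only reparametrizes the input: the map $\rho\mapsto (W_2\otimes I)\rho(W_2\otimes I)^\dagger$ is a bijection on density matrices, so the maximum over $\rho$ is unchanged. This gives the second claim. Taking $W_1=U_1^\dagger$ and $W_2=I$ then reduces the first claim to computing $d_\diamond(\U(I),\U(V))$ for $V=U_1^\dagger U_2$, whose spectrum is $\{e^{i\alpha_k}\}$.

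\emph{Reduction to a convex-hull distance.} Since $\|\cdot\|_1$ is convex, the supremum over $\rho$ is attained on pure states $\ket{\psi}$. For a pure state, the two outputs $\ketbra{\psi}{\psi}$ and $\ketbra{\psi'}{\psi'}$ with $\ket{\psi'}=(V\otimes I)\ket\psi$ satisfy the standard identity
\[\|\ketbra{\psi}{\psi}-\ketbra{\psi'}{\psi'}\|_1=2\sqrt{1-|\braket{\psi|\psi'}|^2}.\]
Writing $\rho_A$ for the reduced state of $\ket\psi$ on the system, we have $\braket{\psi|V\otimes I|\psi}=\mathrm{Tr}(\rho_A V)=\sum_k p_k e^{i\alpha_k}$, where $p_k=\bra{v_k}\rho_A\ket{v_k}$ in an eigenbasis $\{\ket{v_k}\}$ of $V$. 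Every probability vector $(p_k)$ is realized, for example already by $\ket\psi=\sum_k\sqrt{p_k}\ket{v_k}$ with $m=1$. Hence
\[d_\diamond=2\sqrt{1-c^2},\]
where $c$ is the distance from $0$ to the convex hull $K$ of the eigenvalues on the unit circle.

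\emph{Geometric step (main obstacle).} It remains to show that $c=\cos(\sigma/2)$ when $\sigma<\pi$ and $c=0$ when $\sigma\ge\pi$. The identity above then yields $2\sin(\min\{\sigma,\pi\}/2)$.

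\begin{itemize}
\item \textbf{Case $\sigma<\pi$.} Rotate, which does not change $c$, so that the minimal arc containing the spectrum is $[-\sigma/2,\sigma/2]$. Every eigenvalue then has real part at least $\cos(\sigma/2)>0$. This linear lower bound passes to all of $K$, so $c\ge\cos(\sigma/2)$. The two arc endpoints are eigenvalues, and their midpoint lies in $K$ at distance exactly $\cos(\sigma/2)$ from $0$, so $c=\cos(\sigma/2)$.
\item \textbf{Case $\sigma\ge\pi$.} By definition of $\sigma$, every gap between angularly consecutive eigenvalues is at most $\pi$. Suppose $0\notin K$. A separating line through the origin would place all eigenvalues in an open half-plane. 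Points in an open half-plane lie in an open semicircle, so some gap would exceed $\pi$, a contradiction. Hence $0\in K$, so $c=0$ and $d_\diamond=2$.
\end{itemize}

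The geometric case analysis is where care is needed: one must track the $\bmod\ 2\pi$ convention in the definition of $\sigma$ and handle the boundary case $\sigma=\pi$ exactly, where the two eigenvalues at the ends of a gap of exactly $\pi$ are antipodal and their midpoint is $0$.
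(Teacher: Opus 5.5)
\emph{Verdict: your proof is correct, but it does not follow a proof in the paper, because the paper gives none.} The paper states this result as a cited fact: the formula comes from Appendix A of \cite{HKOT} and unitary covariance from \cite{watrous2018theory}. What you supply is a self-contained derivation along the standard lines of those references, in three steps.
\begin{itemize}
\item Covariance follows from unitary invariance of the trace norm together with the bijective reparametrization $\rho \mapsto (W_2\otimes I)\rho(W_2\otimes I)^\dagger$ of the input.
\item A pure-state reduction gives $d_\diamond = 2\sqrt{1-c^2}$, where $c$ is the distance from the origin to the numerical range of $U_1^\dagger U_2$. Since this operator is normal, the numerical range is the convex hull $K$ of its spectrum. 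You establish this by realizing every probability vector $(p_k)$, already with $m=1$.
\item An elementary planar computation gives $c=\cos(\sigma/2)$ for $\sigma<\pi$ and $c=0$ otherwise. Both cases are handled correctly, including the strictness in the separating-line argument and the boundary case $\sigma=\pi$.
\end{itemize}

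The citation keeps the paper short. Your route buys three things.
\begin{itemize}
\item It shows that no ancilla is needed for unitary channels.
\item It fixes $\sigma$ as the geometric length of the minimal arc containing the spectrum, which is what the formula actually requires. The paper's literal expression $2\pi-\max_j(\alpha_j-\alpha_{j+1} \bmod 2\pi)$ computes this only if the $\alpha_j$ are listed in decreasing cyclic order. With representatives in $[0,2\pi)$, it also returns $2\pi$ rather than $0$ when all eigenvalues coincide. Your remark about tracking the $\bmod\ 2\pi$ convention is therefore well placed.
\item Your $\sigma\ge\pi$ case, which shows that all gaps are at most $\pi$ and hence the origin lies in $K$, uses the same semicircle geometry the paper later invokes in the proof of \Cref{lem:delta_bound}. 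Your write-up makes that connection explicit.
\end{itemize}
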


\begin{definition}[$\Delta$] $\Delta$ takes in the eigenangles of a unitary and outputs the maximum distance between pairs of its eigenvalues,
\begin{align}
    \Delta : \mathbb{T}^d &\rightarrow \mathbb{R} \\
    (\alpha_1,\ldots,\alpha_d) &\mapsto \max_{j_1,j_2 \in \{1,\ldots,d\}} |e^{i\alpha_{j_1}}-e^{i\alpha_{j_2}}|.
\end{align}
\end{definition}

\begin{lemma}[$\Delta$ and diamond distance error]\label{lem:delta_bound}
    Let $U \in \mathrm{U}(d)$ have eigenangles $\alpha = (\alpha_1, \ldots, \alpha_d)$. Then $d_\diamond(\U(U), \U(I)) \leq 2/\sqrt{3}\cdot \Delta(\alpha)$.
\end{lemma}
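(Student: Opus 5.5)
The plan is to compare the two sides through the explicit formula in \Cref{fact:unitary_diamond_dist}. We take $U_1 = I$ and $U_2 = U$, so that $U_1^\dagger U_2 = U$, and obtain $d_\diamond(\U(U),\U(I)) = 2\sin(\min\{\sigma(\alpha),\pi\}/2)$. We also rewrite $\Delta$ geometrically. For two eigenvalues whose angular distance on the circle is $\theta \in [0,\pi]$, the chord length is $|e^{i\alpha_{j_1}} - e^{i\alpha_{j_2}}| = 2\sin(\theta/2)$, which is increasing in $\theta$. Hence $\Delta(\alpha) = 2\sin(\theta_{\max}/2)$, where $\theta_{\max}$ is the largest pairwise angular distance between eigenvalues. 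We then split into two cases according to whether $\sigma(\alpha) \le \pi$.

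\emph{Case $\sigma(\alpha)\le\pi$.} The spectrum lies in a closed arc of length $\sigma(\alpha) \le \pi$. Since this arc is minimal, both of its endpoints are eigenvalues. These two endpoints are at angular distance exactly $\sigma(\alpha)$, because the arc between them has length at most $\pi$. Therefore $\theta_{\max} \ge \sigma(\alpha)$, and
\[ d_\diamond(\U(U),\U(I)) = 2\sin\frac{\sigma(\alpha)}{2} \le \Delta(\alpha) \le \frac{2}{\sqrt3}\Delta(\alpha). \]

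\emph{Case $\sigma(\alpha) > \pi$.} Here the diamond distance equals $2$, so it suffices to show $\Delta(\alpha) \ge \sqrt3$. Equivalently, we must show that some pair of eigenvalues has angular distance at least $2\pi/3$. We argue by contradiction and assume every pairwise angular distance is less than $2\pi/3$.
\begin{enumerate}
\item Fix one eigenvalue $e^{i\alpha_a}$. By assumption, every eigenvalue lies in the open arc of half-width $2\pi/3$ centered at $e^{i\alpha_a}$.
\item Measure signed angles from $\alpha_a$ within $(-2\pi/3, 2\pi/3)$. Let $e^{i\alpha_b}$ be the eigenvalue with the largest signed angle $\theta_b \ge 0$, and let $e^{i\alpha_c}$ be the one with the most negative angle $-\theta_c \le 0$.
\item All eigenvalues then lie in an arc of length $L = \theta_b + \theta_c$, and $L < 4\pi/3$.
\item If $L \le \pi$, then $\sigma(\alpha) \le \pi$, which contradicts the case assumption.
\item If $L > \pi$, then the angular distance between $e^{i\alpha_b}$ and $e^{i\alpha_c}$ is $2\pi - L > 2\pi/3$, which contradicts the standing assumption.
\end{enumerate}
Either way we reach a contradiction, so $\theta_{\max} \ge 2\pi/3$. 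Thus $\Delta(\alpha) \ge 2\sin(\pi/3) = \sqrt3$, and $d_\diamond = 2 \le \tfrac{2}{\sqrt3}\Delta(\alpha)$.

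The only step requiring care is the second case, specifically the geometric fact that a spectrum not contained in any semicircle must contain a pair at angular distance at least $2\pi/3$. The constant $2/\sqrt3$ is tight: three eigenvalues equally spaced on the circle give $\sigma = 4\pi/3$, diamond distance $2$, and $\Delta = \sqrt3$. The first case is routine once the minimal-arc endpoints are identified as eigenvalues.
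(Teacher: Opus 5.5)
Your proof is correct and follows the paper's structure: the same case split on whether $\sigma(\alpha)\le\pi$ using the explicit diamond-distance formula, with the second case reduced to exhibiting two eigenvalues at angular distance at least $2\pi/3$. The only difference is how you prove that geometric fact. The paper finds a third eigenvalue in the intersection of the semicircles starting at the minimal arc's endpoints and uses that the resulting triangle contains the origin, whereas your pivot-and-extremes contradiction argument is an equally valid and self-contained route; your tightness example with three equally spaced eigenvalues is a nice addition.
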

\begin{proof}
    Since $U$ and $I$ are unitary, $d_\diamond(\U(U),\U(I)) = 2\sin(\min\{ \sigma(\alpha),\pi \}/2)$ where $\sigma(\alpha)$ is the arc length of the shortest arc containing $U$'s spectrum $\{ e^{i\alpha_1}, \ldots, e^{i\alpha_d} \}$ (\Cref{fact:unitary_diamond_dist}). 
    
    When $\sigma(\alpha) \leq \pi$, the maximum distance between eigenvalues is the distance between the eigenvalues at each end of the shortest arc containing $U$'s spectrum. Hence $\Delta(\alpha) = 2\sin(\sigma(\alpha)/2) = d_\diamond(\U(U),\U(I)).$ 
    
    When $\sigma(\alpha) \geq \pi$, consider the eigenvalues $z_1$ and $z_2$ at each end of the shortest arc containing $U$'s spectrum. There is a semicircle between $z_1$ and $-z_1$ contained in the arc, and a semicircle between $z_2$ and $-z_2$ contained in the arc. If there were no eigenvalues in the intersection of these semicircles, then the arc would not be the shortest one containing $U$'s spectrum, a contradiction. Therefore there exists an eigenvalue $z_3$ in the semicircles' intersection. The convex hull of $z_1, z_2,$ and $z_3$ contains the origin. This implies some pair of $z_1,z_2,$ and $z_3$ is separated by an angle of at least $2\pi/3$ radians, giving $\Delta(\alpha) \geq 2\sin(\pi/3) = \sqrt{3}/2 \cdot 2\sin(\pi/2)$. Hence $2/\sqrt{3} \cdot \Delta(\alpha) \geq d_\diamond(\U(U), \U(I))$.
\end{proof}

\subsection{Proof of variance bound}

Our proof of \Cref{prop:error_variance}, which states that our protocol satisfies $\mathbb{E}(\bm\e^2) \in O(d^{1/r} r^2 N^{-2})$, follows the steps outlined in \ref{sec:pf_of_main_thm}. We repeat those steps here:
\begin{enumerate}
    \item We shift from taking an expectation over the unitary group $\mathrm{U}(d)$ to expectation over the $d$-dimensional torus $\mathbb{T}^d$. Let $\bm\alpha = (\bm\alpha_1,\ldots,\bm\alpha_d)$ be a random element of $\mathbb{T}^d$ sampled from this probability distribution. Then the function $\Delta:\mathbb{T}^d \rightarrow \mathbb{R}$ satisfies
    \begin{equation}\bbE_{\mathrm{U}(d)}(\bm{\e}^2) \leq \frac{4}{3} \bbE_{\mathbb{T}^d}(\Delta(\bm\alpha)^2).\end{equation}
    \item Let $1 \leq j_1 < j_2 \leq d$ be arbitrary indices. We show that bounding $\bbE_{\mathbb{T}^d}(\Delta(\bm\alpha)^2)$ reduces to bounding $\bbE_{\mathbb{T}^d}(|e^{i\bm{\alpha}_{j_1}} - e^{i\bm{\alpha}_{j_2}}|^{2r})$,
    \begin{equation} \bbE_{\mathbb{T}^d}(\Delta(\bm\alpha)^2) \leq 4 \left( \frac{1}{d} \sum_{1 \leq j_1 < j_2 \leq d} \bbE_{\mathbb{T}^d}\left(|e^{i\bm{\alpha}_{j_1}} - e^{i\bm{\alpha}_{j_2}}|^{2r}\right) \right)^{1/r}. \end{equation}
    \item $\bbE_{\mathbb{T}^d}(|e^{i\bm{\alpha}_{j_1}} - e^{i\bm{\alpha}_{j_2}}|^{2r})$ is controlled by finite differences of $u$,
    \begin{equation} \bbE_{\mathbb{T}^d}(|e^{i\bm\alpha_{j_1}} - e^{i\bm\alpha_{j_2}}|^{2r}) \leq \left( \sum_{s=0}^r \binom{r}{s} \| D_+^s u \|_{\ell^2(\mathbb{Z})} \| D_+^{r-s} u \|_{\ell^2(\mathbb{Z})} \right)^2. \end{equation} 
    We prove this by rewriting the expectation over the torus as the $L^2$ norm of a function on the torus. This equals the $\ell^2$ norm of the function's Fourier transform, which we can relate to $u$.
    \item Analyzing a continuized and rescaled version of $u$, we show
    \begin{equation} \| D_+^s u \|_{\ell^2(\mathbb{Z})} \leq \left( \frac{\pi r}{N}\right)^s. \end{equation}
    This is sufficient to conclude our desired bound of $\bbE(\bm{\e}^2) \in O(r^2 d^{1/r} N^{-2})$ when combined with the previous steps.
\end{enumerate}

\begin{repeatedstatement}[Error variance bound]{prop:error_variance} Let $\bm\e = d_\diamond(\U(U),\U(\bm{\widehat{U}}))$ where $\bm{\widehat{U}}$ is the random variable output of $\mathcal{A}(r,d,n,U)$. Assume that $d \geq 2$ and that $r,n,$ and $d$ are valued such that $N > r$. Then $\mathbb{E}(\bm\e^2) \leq 64 \pi^2/3 \cdot (d-1)^{1/r} r^2/N^2$.
\end{repeatedstatement}

\begin{proof}
We wish to bound
\begin{equation} \mathbb{E}(\bm\e^2) = \int_{\mathrm{U}(d)} \mathrm{d}\widehat{U}\ p(\widehat{U}|U) \cdot d_\diamond(\U(\widehat{U}),\U(U))^2 \end{equation}
where $\mathrm{d}\widehat{U}$ is the Haar measure. We structure the proof according to the four steps above.

\underline{Step 1:} The Haar measure is translation invariant and $p$ and $d_\diamond$ are unitarily covariant (\Cref{fact:p_covariant}, \Cref{fact:unitary_diamond_dist}). That lets us assume $U=I$ without loss of generality. More precisely, shifting the variable of integration to $\widehat{W} = \widehat{U}U^{-1}$ yields
\begin{equation} \mathbb{E}(\bm\e^2) = \int_{\mathrm{U}(d)} \mathrm{d}\widehat{W}\ p(\widehat{W}|I) \cdot d_\diamond(\U(\widehat{W}),\U(I))^2. \end{equation} 

The integrand is a class function depending only on the eigenangles of $\widehat{W}$. That means we can use the Weyl integral formula (\Cref{fact:Weyl_integral_formula}) to integrate over the torus of eigenangles $\mathbb{T}^d$ instead of integrating over $\mathrm{U}(d)$. This requires expressing our integrand in terms of the eigenangles $\alpha = (\alpha_1, \ldots, \alpha_d)$ of $\widehat{W}$. We have
\begin{align}
    p(\widehat{W}|I) &= \left| \frac{1}{\det V(\alpha)}\sum_{\lambda\in \mathsf{Y}_\text{probe}} \sum_{\tau \in \mathrm{S}_d} \text{sgn}(\tau) \sqrt{q_\lambda} e^{i\alpha \cdot \omega(\lambda,\tau)} \right|^2 & \text{by \Cref{lem:omega_facts}} \\
    d_\diamond(\U(\widehat{W}),\U(I)) &= 2\sin\frac{\min \{ \sigma(\alpha), \pi \}}{2} & \text{by \Cref{fact:unitary_diamond_dist}} \\
    &\leq \frac{2}{\sqrt{3}} \Delta(\alpha) & \text{by \Cref{lem:delta_bound}}.
\end{align}
Define $A \in \mathcal{T}(\mathbb{T}^d)$ by 
\begin{equation}
A(\alpha) = \sum_{\lambda\in \mathsf{Y}_\text{probe}} \sum_{\tau \in \mathrm{S}_d} \text{sgn}(\tau) \sqrt{q_\lambda} e^{i\alpha \cdot \omega(\lambda,\tau)}    
\end{equation} 
so that $p(\widehat{W}|I) = |A(\alpha)/V(\alpha)|^2$. By the Weyl integral formula,
\begin{align}
    \mathbb{E}(\bm\e^2) &= \int_{\mathbb{T}^d} \mathrm{d}\alpha\ \frac{1}{d!(2\pi)^d} |\det V(\alpha)|^2 \left| \frac{A(\alpha)}{\det V(\alpha)} \right|^2 \cdot \left(2\sin\frac{\min\{\sigma(\alpha),\pi\}}{2}\right)^2\\
    &\leq \frac{4}{3} \int_{\mathbb{T}^d} \mathrm{d}\alpha\ \frac{1}{d!(2\pi)^d} | A(\alpha) |^2 \cdot \Delta(\alpha)^2.
\end{align}
The quantity $1/(d!(2\pi)^d) \cdot |A(\alpha)|^2$ integrates to $1$ over the torus, so we may regard it as a probability distribution on $\mathbb{T}^d$. The right hand side then becomes $4/3 \cdot \mathbb{E}(\Delta(\bm\alpha)^2)$ where $\bm{\alpha}$ is drawn from that distribution. This completes step 1.

\underline{Step 2:} We begin by showing that for all $\alpha \in \mathbb{T}^d$,
\begin{equation}\label{eq:Delta_r_ineq}
    \Delta(\alpha)^2 \leq 4\left(\frac{1}{d} \sum_{1 \leq j_1 < j_2 \leq d} |e^{i\alpha_{j_1}} - e^{i\alpha_{j_2}}|^{2r} \right)^{1/r}.
\end{equation}
By definition of $\Delta$, there is some pair $j_a,j_b$ for which $|e^{i\alpha_{j_a}} - e^{i\alpha_{j_b}}| = \Delta(\alpha)$. For the $d-2$ other indices $j$, the triangle inequality implies one of $|e^{i\alpha_{j}} - e^{i\alpha_{j_a}}|$ or $|e^{i\alpha_{j}} - e^{i\alpha_{j_b}}|$ is at least $\Delta(\alpha)/2$. We find
\begin{equation} \sum_{1 \leq j_1 < j_2 \leq d} |e^{i\alpha_{j_1}} - e^{i\alpha_{j_2}}|^{2r} \geq (d-2) \cdot \left( \frac{\Delta(\alpha)}{2}\right)^{2r} + |e^{i\alpha_{j_a}} - e^{i\alpha_{j_b}}|^{2r} \geq d \left( \frac{\Delta(\alpha)}{2} \right)^{2r}. \end{equation}
Raising both sides to $1/r$ yields \Cref{eq:Delta_r_ineq}. 

Taking expectations of both sides of \Cref{eq:Delta_r_ineq} and applying Jensen's inequality finishes step 2.

\underline{Step 3:} We start by expressing $\mathbb{E}(|e^{i\alpha_{j_1}} - e^{i\alpha_{j_2}}|^{2r})$ as an $L^2(\mathbb{T}^d)$ norm,
\begin{equation} \mathbb{E}(|e^{i\alpha_{j_1}} - e^{i\alpha_{j_2}}|^{2r}) = \frac{1}{d!}\cdot \frac{1}{(2\pi)^d} \int_{\mathbb{T}^d} \mathrm{d}\alpha\  |A(\alpha)|^2 \cdot |e^{i\alpha_{j_1}} - e^{i\alpha_{j_2}}|^{2r} = \frac{1}{d!} \left\| (e^{i\alpha_{j_1}} - e^{i\alpha_{j_2}})^r \cdot A \right\|_{L^2(\mathbb{T}^d)}^2. \end{equation}
Next, we rewrite this as an $\ell^2$ norm in the Fourier domain. Let $\hat{A} \in C_c(\mathbb{Z}^d)$ be the Fourier transform of $A$. Under the Fourier transform, multiplicative factors of $e^{i\alpha_j}$ map to shift operators $S_j$ (\Cref{lem:shift_and_fin_dif_ops}). Then by Plancherel's theorem (\Cref{fact:plancherel}),
\begin{equation}\label{eq:applying_Plancherel} \frac{1}{d!} \left\| (e^{i\alpha_{j_1}} - e^{i\alpha_{j_2}})^r \cdot A \right\|_{L^2(\mathbb{T}^d)}^2 = \frac{1}{d!} \left\| [(S_{j_1} - S_{j_2})^r] (\hat{A}) \right\|_{\ell^2(\mathbb{Z}^d)}^2.\end{equation}

To simplify this quantity, we will express $[(S_{j_1} - S_{j_2})^r] (\hat{A})$ as a sum of $d!$ functions with pairwise disjoint supports. By definition, $A(\alpha) = \sum_{\lambda\in \mathsf{Y}_\text{probe}} \sum_{\tau \in \mathrm{S}_d} \text{sgn}(\tau) \sqrt{q_\lambda} e^{i\alpha \cdot \omega(\lambda,\tau)}$. Since $\omega$ is injective (\Cref{lem:omega_facts}), no two summands share the same frequency vector $\omega(\lambda,\tau)$. Therefore the coefficient on $e^{i\alpha \cdot k}$ in $A(\alpha)$ is
\begin{equation} \hat{A}(k) = \begin{cases}
    \text{sgn}(\tau) \sqrt{q_\lambda}, &\text{if } \exists \lambda\in \mathsf{Y}_\text{probe} \text{ and } \tau\in \mathrm{S}_d \text{ such that } k = \omega(\lambda,\tau) \\
    0, & \text{otherwise}.
\end{cases}\end{equation}

Now we express $\hat{A}$ as a sum of $d!$ functions. Let $K_\text{id}$ be the set of frequencies corresponding to the identity permutation, $K_\text{id} = \{ k \in \mathbb{Z}^d\ |\ \exists \lambda \in \mathsf{Y}_\text{probe} \text{ s.t. } k = \omega(\lambda,\text{id}) \}$. Define $\hat{B}$ to be $\hat{A}$ with support truncated to $K_\text{id}$, $\hat{B} = \hat{A} \cdot \mathds{1}_{K_\text{id}}$. Since $\omega(\lambda, \tau) = \tau^{-1}(\omega(\lambda,\text{id}))$ (\Cref{lem:omega_facts}), we have 
\begin{equation}
\hat{A} = \sum_{\tau \in \mathrm{S}_d} \text{sgn}(\tau)\hat{B}\circ \tau. 
\end{equation}

Next, we show these $d!$ summands have disjoint supports after being acted on by $(S_{j_1} - S_{j_2})^r$. The components of $\omega(\lambda,\text{id})$ strictly decrease with gaps of more than $r$, $\omega(\lambda,\text{id})_j - \omega(\lambda,\text{id})_{j+1} > r$ for all $j = 1, \ldots, d-1$ (\Cref{lem:omega_facts}). This means vectors $k \in \text{supp}(\hat{B})$ satisfy the same property, $k_j - k_{j+1} > r$ for all $j = 1, \ldots, d-1$. Acting on $\hat{B}$ with $(S_{j_1} - S_{j_2})^r$ yields a sum of scalar multiples of terms of the form $\pm S_{j_1}^{r_1}S_{j_2}^{r_2}\hat{B}$ where $0 \leq r_1,r_2 \leq r$. Vectors $k'$ in the supports of these terms are of the form $k' = k + r_1e_{j_1} + r_2e_{j_2}$ for some $k \in \text{supp}(\hat{B})$. Since gaps between consecutive components of $k$ are all at least $r$, the gaps between consecutive components of $k'$ are all at least $(r+1) - \max\{r_1,r_2\} > 0$ (since $j_1 \neq j_2$). This means for all $j_1,j_2,r_1,$ and $r_2$, $\text{supp}(\pm S_{j_1}^{r_1}S_{j_2}^{r_2}\hat{B})$ lies in the sector of $\mathbb{Z}^d$ consisting of vectors whose components strictly decrease. Calling that sector $Z = \{k \in \mathbb{Z}^d\ |\ k_1 > \cdots > k_d \}$, we conclude $\text{supp}([(S_{j_1} - S_{j_2})^r](\hat{B})) \subset Z$ for all $j_1,j_2$. Then for any $\tau \in \mathrm{S}_d$, 
\begin{align}
    \text{supp}\left([(S_{j_1} - S_{j_2})^r](\text{sgn}(\tau)\hat{B} \circ \tau)\right) &= \text{supp}\left([(S_{\tau(j_1)} - S_{\tau(j_2)})^r](\text{sgn}(\tau)\hat{B}) \circ \tau\right) \\
    &= \tau^{-1}\left(\text{supp}\left([(S_{\tau(j_1)} - S_{\tau(j_2)})^r](\text{sgn}(\tau)\hat{B})\right)\right) \\
    &= \tau^{-1}\left(\text{supp}\left([(S_{\tau(j_1)} - S_{\tau(j_2)})^r](\hat{B})\right)\right) \\
    &\subset \tau^{-1}(Z).
\end{align}
Since $\{\tau^{-1}(Z)\}_{\tau \in \mathrm{S}_d}$ are pairwise disjoint, the supports of $[(S_{j_1} - S_{j_2})^r](\text{sgn}(\tau)\hat{B} \circ \tau)$ are pairwise disjoint, as claimed.

Returning to the norm in \Cref{eq:applying_Plancherel},
\begin{align}
    \frac{1}{d!} \left\| [(S_{j_1} - S_{j_2})^r] (\hat{A}) \right\|_{\ell^2(\mathbb{Z}^d)}^2 &= \frac{1}{d!} \left\| \sum_{\tau \in \mathrm{S}_d} [(S_{j_1} - S_{j_2})^r](\text{sgn}(\tau)\hat{B}\circ \tau) \right\|_{\ell^2(\mathbb{Z}^d)}^2 \\
    &= \frac{1}{d!} \sum_{\tau \in \mathrm{S}_d} \left\| [(S_{j_1} - S_{j_2})^r](\text{sgn}(\tau)\hat{B}\circ \tau) \right\|_{\ell^2(\mathbb{Z}^d)}^2 \\
    &= \frac{1}{d!} \sum_{\tau \in \mathrm{S}_d} \left\| [(S_{\tau(j_1)} - S_{\tau(j_2)})^r](\text{sgn}(\tau)\hat{B})\circ \tau\right\|_{\ell^2(\mathbb{Z}^d)}^2 \\
    &= \frac{1}{d!} \sum_{\tau \in \mathrm{S}_d} \left\| [(I - S_{\tau(j_1)}^{-1}S_{\tau(j_2)})^r](\hat{B}) \right\|_{\ell^2(\mathbb{Z}^d)}^2 \\
    &= \frac{1}{d(d-1)} \sum_{\substack{1 \leq i_1 \leq d \\ 1 \leq i_2 \leq d \\ i_1 \neq i_2}} \left\| [(I - S_{i_1}^{-1}S_{i_2})^r](\hat{B}) \right\|_{\ell^2(\mathbb{Z}^d)}^2. 
\end{align}
The second line follows from summands having disjoint support (\Cref{lem:sum_and_product_norms}). The penultimate line holds because shift operators, multiplication by $\text{sgn}(\tau)$, and precomposition with $\tau$ are all unitary operations on $\ell^2(\mathbb{Z}^d)$ (\Cref{lem:shift_and_fin_dif_ops}).

$[(I - S_{i_1}^{-1}S_{i_2})^r](\hat{B})$ is a function from $\mathbb{Z}^d \rightarrow \mathbb{R}$. By considering a reparametrized version with domain $\mathbb{Z}^{d-1}$, we can relate this function to $u$. Consider the parametrization map $\tilde{\lambda} \mapsto \lambda$ extended to all of $\mathbb{Z}^{d-1}$. This is given by the injective affine map $\iota: \mathbb{Z}^{d-1} \rightarrow \mathbb{Z}^{d}, l \mapsto \mu_0 + \lambda_0 + l \cdot (e_1 - e_d, \ldots, e_{d-1} - e_d)$. Next, we name the map that brings a diagram $\lambda$ to a frequency vector in $\text{supp}(\omega(\lambda,\text{id}))$. We call it $\kappa: \mathbb{Z}^d \rightarrow \mathbb{Z}^d$ and define it as $\kappa(k) = k + (d-1, d-2, \ldots, 0)$. We now reparametrize $\hat{B}$ by defining $\hat{b} : \mathbb{Z}^{d-1} \rightarrow \mathbb{R}$ to be $\hat{b} = \hat{B} \circ \kappa \circ \iota$. We have $\hat{b}(l) = \sqrt{q_{\iota(l)}}$ for $l \in \{0,\ldots,N-1\}^{d-1}$ and $\hat{b}(l) = 0$ otherwise; in other words, $\hat{b}(l) = u(l_1) \cdot \ldots \cdot u(l_{d-1})$ (by definition of $u$ and $q_\lambda$). 

Using the reparametrization, we can express the $\ell(\mathbb{Z}^d)$ norm of $[(I - S_{i_1}^{-1}S_{i_2})^r](\hat{B})$ as the $\ell(\mathbb{Z}^{d-1})$ norm of a different function. The operator $S_{i_1}^{-1}S_{i_2}$ acting on $\hat{B}$ corresponds to the following operator $T_{i_1i_2}$ acting on $\hat{b}$,
\begin{align}
    S_{i_1}^{-1}S_{i_2}(\hat{B}) \circ \kappa \circ \iota = T_{i_1i_2}(\hat{b}) = \begin{cases}
        S_{i_1}^{-1}(\hat{b}), &i_1 < i_2 = d \\
        S_{i_2}(\hat{b}), &i_2 < i_1 = d \\
        S_{i_1}^{-1}S_{i_2}(\hat{b}), &\text{otherwise.}
    \end{cases}
\end{align}
Also, for each $0 \leq s \leq r$, we have that $\text{supp}([S_{i_1}^{-1}S_{i_2}]^{s}(\hat{B})) = \text{supp}(\hat{B}) - s(e_{i_1} - e_{i_2})$ is contained in $\text{im}(\kappa \circ \iota)$. This holds because $\text{supp}(\hat{B}) \subset \text{im}(\kappa \circ \iota)$ and because the shift $- s(e_{i_1} - e_{i_2})$ can be produced by altering the input to $\iota$ as follows:
\begin{align}
    \iota(l) - s(e_{i_1} - e_d) &= \iota(l - se_{i_1}), &\text{ for } i_1 < d\\
    \iota(l) - s(e_d - e_{i_2}) &= \iota(l + se_{i_2}), &\text{ for } i_2 < d \\
    \iota(l) - s(e_{i_1} - e_{i_2}) &= \iota(l - se_{i_1} + se_{i_2}), &\text{ for } i_1,i_2<d.
\end{align}
Hence $\text{supp}\left([(I - S_{i_1}^{-1}S_{i_2})^r](\hat{B})\right) \subset \text{im}(\kappa \circ \iota)$ for all $i_1\neq i_2$. That allows us to convert from the $\ell^2$ norm on $\mathbb{Z}^d$ to the $\ell^2$ norm on $\mathbb{Z}^{d-1}$,
\begin{align}
    \left\| [(I - S_{i_1}^{-1}S_{i_2})^r](\hat{B}) \right\|_{\ell^2(\mathbb{Z}^d)}^2 &= \sum_{k \in \mathbb{Z}^d} \left| \left([(I - S_{i_1}^{-1}S_{i_2})^r](\hat{B})\right)(k) \right|^2 \\
    &= \sum_{k \in \text{im}(\kappa \circ \iota)} \left| \left([(I - S_{i_1}^{-1}S_{i_2})^r](\hat{B})\right)(k) \right|^2 \\
    &= \sum_{l \in \mathbb{Z}^{d-1}} \left| \left([(I - S_{i_1}^{-1}S_{i_2})^r](\hat{B})\right)(\kappa \circ \iota(l)) \right|^2 \\
    &= \sum_{l \in \mathbb{Z}^{d-1}} \left| \left([(I - T_{i_1i_2})^r](\hat{b})\right)(l) \right|^2 \\
    &= \left\| [(I - T_{i_1i_2})^r](\hat{b}) \right\|_{\ell^2(\mathbb{Z}^{d-1})}^2
\end{align}
where the third line holds by injectivity of $\kappa \circ \iota$. 

Finally, we use the fact that $\hat{b}(l) = u(l_1) \cdot \ldots \cdot u(l_{d-1})$ to bound this norm in terms of finite differences of $u$. Functions $\mathbb{Z}^{d-1} \rightarrow \mathbb{C}$ expressible as a product of functions $\hat{f_i}: \mathbb{Z} \rightarrow \mathbb{C}$ on each input coordinate satisfy $\| \hat{f} \|_{\ell^2(\mathbb{Z}^{d-1})} = \| \hat{f}_1 \|_{\ell^2(\mathbb{Z})} \cdot \ldots \cdot \| \hat{f}_{d-1} \|_{\ell^2(\mathbb{Z})}$ (\Cref{lem:sum_and_product_norms}). $\hat{b}$ is of this form with function $u$ on each coordinate. $[(I - T_{i_1i_2})^r](\hat{b})$ is a sum of functions of this form with $u$ on most coordinates, but with different functions on coordinates $i_1$ and $i_2$. We now bound $\| [(I - T_{i_1i_2})^r](\hat{b}) \|_{\ell^2(\mathbb{Z}^{d-1})}$ by proceeding by cases on $i_1$ and $i_2$. When $i_1 < i_2 = d$, we know $I - T_{i_1i_2} = D_{+,i_1}$. Recalling that $N>r$, which ensures $\| u \|_{\ell^2(\mathbb{Z})}=1$ (\Cref{def:u}), we have
\begin{equation} \left\| [(I - T_{i_1i_2})^r](\hat{b}) \right\|_{\ell^2(\mathbb{Z}^{d-1})} = \| D_{+}^r u \|_{\ell^2(\mathbb{Z})} \prod_{i \in \{1,\ldots,d\} \setminus i_1} \| u \|_{\ell^2(\mathbb{Z})} = \| D_+^r u \|_{\ell^2(\mathbb{Z})}. \end{equation}
When $i_2 < i_1 = d$,
\begin{equation} \left\| [(I - T_{i_1i_2})^r](\hat{b}) \right\|_{\ell^2(\mathbb{Z}^{d-1})} = \| D_-^r u \|_{\ell^2(\mathbb{Z})} \prod_{i \in \{1,\ldots,d\} \setminus i_2} \| u \|_{\ell^2(\mathbb{Z})} = \| D_-^r u \|_{\ell^2(\mathbb{Z})}. \end{equation}
Finally, when neither $i_1$ nor $i_2$ equals $d$, we can write $I - T_{i_1i_2} = I - S_{i_1}^{-1}S_{i_2} = (I - S_{i_1}^{-1})S_{i_2} + (I - S_{i_2}) = S_{i_2}D_{+,i_1} + D_{-,i_2}$. Since $S_{i_2}D_{+,i_1}$ and $D_{-,i_2}$ commute, $(I - T_{i_1i_2})^r = \sum_{s=0}^r \binom{r}{s} (S_{i_2}D_{+,i_1})^{s} D_{-,i_2}^{r-s}$. Applying the triangle inequality,
\begin{align}
    \left\| [(I - T_{i_1i_2})^r](\hat{b}) \right\|_{\ell^2(\mathbb{Z}^{d-1})} &= \left\| \sum_{s=0}^r \binom{r}{s} \left[(S_{i_2}D_{+,i_1})^{s} D_{-,i_2}^{r-s}\right] (\hat{b})\right\|_{\ell^2(\mathbb{Z}^{d-1})} \\
    &\leq \sum_{s=0}^r \binom{r}{s} \left( \| (SD_+)^s u \|_{\ell^2(\mathbb{Z})} \| D_-^{r-s} u \|_{\ell^2(\mathbb{Z})} \prod_{i \in \{1,\ldots,d\} \setminus \{i_1,i_2\}} \| u \|_{\ell^2(\mathbb{Z})} \right).
\end{align}
$S$ is unitary, and forward and backward difference operators are unitarily related, $D_+ = (-S^{-1})D_-$ (\Cref{lem:shift_and_fin_dif_ops}), so
\begin{equation} \left\| [(I - T_{i_1i_2})^r](\hat{b}) \right\|_{\ell^2(\mathbb{Z}^{d-1})} \leq \sum_{s=0}^r \binom{r}{s} \| D_+^{s} u \|_{\ell^2(\mathbb{Z})} \| D_+^{r-s} u \|_{\ell^2(\mathbb{Z})}. \end{equation}
We conclude that in all cases, $\left\| [(I - T_{i_1i_2})^r](\hat{b}) \right\|_{\ell^2(\mathbb{Z}^{d-1})} \leq \sum_{s=0}^r \binom{r}{s} \| D_+^{s} u \|_{\ell^2(\mathbb{Z})} \| D_+^{r-s} u \|_{\ell^2(\mathbb{Z})}$. Hence
\begin{align}
    \mathbb{E}(|e^{i\alpha_{j_1}} - e^{i\alpha_{j_2}}|^{2r}) &= \frac{1}{d(d-1)} \sum_{\substack{1 \leq i_1 \leq d \\ 1 \leq i_2 \leq d \\ i_1 \neq i_2}} \left\| [(I - T_{i_1i_2})^r](\hat{b}) \right\|_{\ell^2(\mathbb{Z}^{d-1})}^2 \\
    &\leq \left(\sum_{s=0}^r \binom{r}{s} \| D_+^{s} u \|_{\ell^2(\mathbb{Z})} \| D_+^{r-s} u \|_{\ell^2(\mathbb{Z})}\right)^2,
\end{align}
completing step 3.

\underline{Step 4:} Let $x_m = (m + 1/2)/N$. Define $f \in \mathcal{T}(0,1)$ by $f(x) = \sin^r(\pi x)$. Let $F : \mathbb{R} \rightarrow \mathbb{R}$ be the extension of $f$ by 0 to all of $\mathbb{R}$. One can check that $u$ is proportional to $F$ evaluated at $x_m$, $u(m) = \sqrt{1/N}/\| f \|_{L^2(0,1)} \cdot F(x_m)$. Our goal for this step will be to bound finite differences of $u$ in terms of derivatives of $f$.

Define the operator $D_{1/N}$ on functions $\mathbb{R} \rightarrow \mathbb{R}$ by $[D_{1/N}(g)](x) = g(x+1/N) - g(x)$. Up to a sign, its action on $F$ corresponds to $D_+$ acting on $u$, 
\begin{equation}
    \frac{\sqrt{1/N}}{\| f \|_{L^2(0,1)}} \cdot [D_{1/N}(F)](x_m) = [-D_+(u)](m).
\end{equation} 

Next, we examine derivatives of $F$. For each $s \in \{ 0, \ldots, r-1\}$, the $s$th one-sided derivatives of $f$ at $0^+$ and $1^-$ are both 0. Hence $F^{(s)}$ is a continuous, piecewise smooth function when $0 \leq s \leq r-1$. If $s = r$, $F^{(r)}$ has discontinuities at $x=0$ and $x=1$; however, since $F^{(r-1)}$ is continuous, we may integrate $F^{(r)}$ as usual. In particular, when $0 \leq s \leq r$, we have $\| f^{(s)} \|_{L^2(0,1)}=\| F^{(s)} \|_{L^2(\mathbb{R})}$. 

When $s = 0$, we have $\| D_+^s(u) \|_{\ell^2(\mathbb{Z})} = \| u \|_{\ell^2(\mathbb{Z})} = 1 \leq (\pi r/N)^s$. For the remainder of this step, we focus on $1 \leq s \leq r$. By the fundamental theorem of calculus, $[D_{1/N}(F)](x) = F(x + 1/N) - F(x) = \int_0^{1/N} \mathrm{d}t\ F'(x+t)$. Applying this $s$ times,
\begin{equation} [D_{1/N}^s(F)](x_m) = \int_{[0,1/N]^s} \mathrm{d}t_1 \cdots \mathrm{d}t_s\ F^{(s)}(x_m + t_1 + \cdots + t_s). \end{equation}
Let $y = t_2 + \cdots + t_s$. Note that for a fixed $y$, the intervals $\{[x_m + y,x_m + y + 1/N]\}_{m \in \mathbb{Z}}$ tile $\mathbb{R}$. This means
\begin{equation} \sum_{m \in \mathbb{Z}} \int_0^{1/N} \mathrm{d}t_1\ | F^{(s)}(x_m + t_1 + y) |^2 = \int_\mathbb{R} \mathrm{d}t_1\ | F^{(s)}(t_1 + y) |^2 = \| F^{(s)} \|_{L^2(\mathbb{R})}^2. \end{equation}
Summing over $m$ and applying Cauchy-Schwarz to our expression for $[D_{1/N}^s(F)](x_m)$,
\allowdisplaybreaks
\begin{align}
    \sum_{m \in \mathbb{Z}} \left|[D_{1/N}^s(F)](x_m)\right|^2 &\leq \sum_{m \in \mathbb{Z}} \frac{1}{N^s} \int_{[0,1/N]^s} \mathrm{d}t_1 \cdots \mathrm{d}t_s\ | F^{(s)}(x_m + t_1 + \cdots + t_s) |^2 \\
    &= \frac{1}{N^s} \int_{[0,1/N]^{s-1}} \mathrm{d}t_2 \cdots \mathrm{d}t_s\ \sum_{m \in \mathbb{Z}} \int_0^{1/N} \mathrm{d}t_1\ | F^{(s)}(x_m + t_1 + y) |^2 \\
    &= \frac{1}{N^{2s-1}} \| F^{(s)} \|_{L^2(\mathbb{R})}^2 \\
    \nonumber \\
    \implies \| D_+^s u \|_{\ell^2(\mathbb{Z})}^2 &= \sum_{m \in \mathbb{Z}} | D_+^s u(m)|^2 \\
    &= \sum_{m \in \mathbb{Z}} \left| \frac{\sqrt{1/N}} {\| f \|_{L^2(0,1)}} \cdot [(-1)^sD_{1/N}^s(F)](x_m) \right|^2 \\
    &\leq \frac{1}{N^{2s}} \cdot \frac{\| F^{(s)} \|_{L^2(\mathbb{R})}^2}{\| f \|_{L^2(0,1)}^2} \\
    &= \left( \frac{1}{N^{s}} \cdot \frac{\| f^{(s)} \|_{L^2(0,1)}}{\| f \|_{L^2(0,1)}} \right)^2.
\end{align}
\allowdisplaybreaks

Direct computation shows that $f(x) = \sin^r(\pi x)$ satisfies $\|f^{(s)}\|_{L^2(0,1)} \leq (\pi r)^s \|f\|_{L^2(0,1)}$. Therefore $\| D_+^s(u) \|_{\ell^2(\mathbb{Z})} \leq (\pi r/N)^s$. This completes step 4.

We now put the steps together. Since $\| D_+^s(u) \|_{\ell^2(\mathbb{Z})} \leq (\pi r/N)^s$ for all $0 \leq s \leq r$ (step 4), we have $\sum_{s=0}^r \binom{r}{s} \| D_+^s(u) \|_{\ell^2(\mathbb{Z})} \| D_+^{r-s}(u) \|_{\ell^2(\mathbb{Z})} \leq (2\pi r/N)^r$. Then
\begin{align}
    &\bbE_{\mathrm{U}(d)}(\bm{\e}^2) \leq \frac{4}{3} \bbE_{\mathbb{T}^d}(\Delta(\bm\alpha)^2) &\text{(step 1)}\\
    &\leq \frac{16}{3} \left( \frac{1}{d} \sum_{1 \leq j_1 < j_2 \leq d} \bbE_{\mathbb{T}^d}\left(|e^{i\bm{\alpha}_{j_1}} - e^{i\bm{\alpha}_{j_2}}|^{2r}\right) \right)^{1/r} &\text{(step 2)}\\
    &\leq \frac{16}{3} \left( \frac{1}{d} \sum_{1 \leq j_1 < j_2 \leq d} \left(\sum_{s=0}^r \binom{r}{s} \| D_+^s(u) \|_{\ell^2(\mathbb{Z})} \| D_+^{r-s}(u) \|_{\ell^2(\mathbb{Z})}\right)^2 \right)^{1/r} &\text{(step 3)} \\
    &\leq \frac{16}{3} \left( (d-1) \left(\frac{2\pi r}{N}\right)^{2r} \right)^{1/r} &\text{(step 4)} \\
    &= \frac{64 \pi^2}{3} \cdot \frac{(d-1)^{1/r} r^2}{N^2}.
\end{align}
\end{proof}

\end{document}